\documentclass[english,11pt]{article}
\usepackage{fullpage}
\usepackage[ruled,vlined]{algorithm2e}
\usepackage[final]{graphics,graphicx}
\usepackage{fixme}
\usepackage{amsmath}
\usepackage{amssymb}
\usepackage{babel}
\usepackage{graphics}
\usepackage{graphicx}
\usepackage{subfigure}
\usepackage{scalefnt}
\usepackage{environ}

\usepackage{tikz}
\usetikzlibrary{arrows,automata,shapes,decorations,calc,matrix}
\newtheorem{theorem}{Theorem}

\newtheorem{lemma}[theorem]{Lemma}
\newtheorem{corollary}[theorem]{Corollary}

\newenvironment{proof}{\noindent {\bf Proof }}{\qed}
\newtheorem{definition}[theorem]{Definition}

\newcommand{\qed}{\penalty 1000 \hfill\penalty 1000$\Box$\par\medskip}

\newcommand{\SolveSSG}{{\tt SolveSSG}}
\newcommand{\ModifiedValueIteration}{{\tt ModifiedValueIteration}}
\newcommand{\SolveDGG}{{\tt SolveDGG}}
\newcommand{\KwekMehlhorn}{{\tt KwekMehlhorn}}
\newcommand{\val}{{\rm val}}

\makeatletter

\begin{document}

\title{Solving simple stochastic games with few coin toss positions\thanks{The authors acknowledge support from
the Danish National Research Foundation and The National Science Foundation of
China (under the grant 61061130540) for the Sino-Danish Center for the Theory of
Interactive Computation, within which this work was performed. The authors also
acknowledge support from the Center for Research in Foundations of Electronic Markets (CFEM), supported by the Danish Strategic Research Council.}}
\author{Rasmus Ibsen-Jensen \\ Department of Computer Scinece \\ Aarhus University \and Peter Bro Miltersen \\ Department of Computer Scinece \\ Aarhus University}

\maketitle
\begin{abstract}
Gimbert and Horn gave an algorithm for solving simple stochastic
games with running time $O(r! n)$ where $n$ is the number of positions of the 
simple stochastic
game and $r$ is the number of its coin toss positions. 
Chatterjee
{\em et al.} 
pointed out that a variant of strategy iteration can be
implemented to solve this problem
in time $4^r r^{O(1)} n^{O(1)}$. In this paper, we show that an algorithm combining
value iteration with retrograde analysis 
achieves a time bound of $O(r 2^r (r \log r + n))$, thus improving both
time bounds. While the algorithm is simple, the analysis leading to
this time bound is involved, using techniques of
extremal combinatorics to identify worst case instances for the algorithm. 
\end{abstract}

\section{Introduction}
{\em Simple stochastic games} is a class of two-player zero-sum
games played on graphs that was introduced 
to the algorithms and complexity community by Condon \cite{Condon92}.
A simple stochastic game is given by a directed finite (multi-)graph 
$G = (V,E)$,
with the set of vertices $V$ also called {\em positions} and the
set of arcs $E$ also called {\em actions}. There is 
a partition of the positions into $V_1$ 
(positions belonging to player
  Max), $V_2$ (positions belonging to player Min), $V_R$ (coin toss
  positions), and a special terminal position GOAL.
Positions of $V_1, V_2, V_R$ have exactly two outgoing arcs, while the terminal
position
GOAL
has none. We shall use $r$ to denote $|V_R|$ (the number of coin toss
positions) and $n$ to denote
$|V|-1$ (the number of non-terminal positions) throughout the paper.
Between moves, a pebble is resting at one of the positions $k$.
If $k$ belongs to a player, this player should strategically
pick an outgoing arc from $k$ and move the pebble along this arc to
another node. If $k$ is a position in $V_R$, Nature picks
an outgoing arc from $k$ uniformly at random and moves the pebble
along this arc. The objective of the game for player Max is to
reach GOAL and should play so as to maximize his probability of
doing so. The objective for player Min is to minimize player Max's probability of
reaching GOAL.

A {\em strategy} for a simple stochastic game 
is a (possibly randomized) procedure
for selecting which arc or action to take, given the history of the
play so far. A {\em positional strategy} is the very special case 
of this where the choice is
deterministic and only depends on the current position, i.e.,
a positional strategy is simply a map from positions 
to actions.
If player Max plays using strategy $x$ and player Min plays using strategy
$y$, and the play starts in position $k$, a random play $p(x,y,k)$ 
of the game is induced.
We let $u(x,y)$ denote the probability that player Max will reach GOAL
in this random play.
A strategy $x^*$ for player Max is said to be {\em optimal} if for all
positions $k$ it holds that
\begin{equation}\label{opt1}
\inf_{y \in S_2} u^k(x^*,y) \geq \sup_{x \in S_1} \inf_{y \in S_2}
u^k(x, y),
\end{equation} 
where $S_1$ $(S_2)$ is the set of strategies for player Max (Min).
Similarly, a strategy $y^*$ for player Min is said to be optimal if
\begin{equation}\label{opt2}
 \sup_{x \in S_1} u^k(x,y^*) \leq \inf_{y \in S_2} \sup_{x \in S_1}
u^k(x, y).
\end{equation}
A general theorem of Liggett and Lippman 
(\cite{LL}, fixing a bug of a proof of Gillette \cite{Gil}) restricted
to simple stochastic games, implies that:
\begin{itemize}
\item{}Optimal positional
strategies $x^*, y^*$ for both players exist.
\item{}For such optimal $x^*,y^*$ and for all positions $k$, \[ \min_{y \in S_2} u^k(x^*,y) =  \max_{x \in S_1} u^k(x,y^*).\] This
number is called the {\em value} of position $k$. We shall denote
it $\val (G)_k$ and the vectors of values $\val (G)$.
\end{itemize}
In this paper, we consider {\em quantitatively solving} simple stochastic games, 
by which we mean
computing the values of all positions of the game, given an 
explicit representation of $G$. Once a simple stochastic game has been quantitatively solved, optimal strategies for both players can be found in linear time \cite{AnMi09}.
However, it was pointed out by Anne Condon twenty years ago that 
no worst case polynomial time algorithm for quantitatively solving simple stochastic games is known. By now,
finding such an algorithm is a celebrated open problem. 
Gimbert and Horn \cite{GH} pointed out
that the problem of solving simple
stochastic games parametrized by $r = |V_R|$ is 
{\em fixed parameter tractable}. That is, simple stochastic games
with ``few'' coin toss positions can be solved efficiently. The
algorithm of Gimbert and Horn runs in time $r ! n^{O(1)}$.
The next natural step in this direction is to try to find an
algorithm with a better dependence on the parameter $r$.
Thus, Dai and Ge \cite{OtherISAAC} gave a {\em randomized} algorithm with {\em expected}
running time $\sqrt{r !} n^{O(1)}$. 
Chatterjee {\em et al.} \cite{C09} pointed out that a variant of the standard
algorithm of {\em strategy iteration} devised earlier by the same
authors \cite{ChatQest} can be applied to find a solution in
time $4^r r^{O(1)} n^{O(1)}$ (they only state a time bound of
$2^{O(r)}n^{O(1)}$, but a slightly more careful analysis yields the
stated bound). The dependence on $n$ in this bound is at least quadratic.
The main result of this
paper is an algorithm running in time $O(r 2^r (r \log r + n))$, thus
improving all of the above bounds.  More precisely, we show:
\begin{theorem}\label{main}
Assuming unit cost arithmetic
on numbers of bit length up to $\Theta(r)$,
simple stochastic games with $n$ positions out of which $r$ are coin toss positions, can be quantitatively solved in
time $O(r 2^r (r \log r + n))$.
\end{theorem}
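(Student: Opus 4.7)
The plan is to reduce quantitative solving to computing the $r$ values at the coin toss positions, then compute these via value iteration applied only to $V_R$, with the values at the player positions being resolved at each step by a retrograde sweep, and with exact rationals recovered from sufficiently accurate approximations using $\KwekMehlhorn$.

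First, I would observe that once $\val(G)|_{V_R}$ is known exactly, a single linear-time retrograde pass computes $\val(G)$ on the remaining positions: contract the graph to a ``deterministic'' one by treating coin toss positions as terminals with known payoffs, and then solve the resulting longest/shortest-path-to-GOAL problem for the two players (this is what $\SolveDGG$ would do). The bulk of the work is then an iterative scheme $\ModifiedValueIteration$ that maintains a vector $\hat v \in [0,1]^{V_R}$ and, at each step, updates $\hat v_k$ to the average of the two best-response values of the two successors of $k$, computed by calling $\SolveDGG$ on the current $\hat v$. Each such iteration costs $O(n)$ for $\SolveDGG$ plus $O(r \log r)$ to maintain the structures that depend on the sorted order of $\hat v$, giving a per-iteration cost of $O(n + r \log r)$.

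Second, I would recover the exact rational values at $V_R$ via $\KwekMehlhorn$ once $\hat v$ is close enough. A Cramer's-rule argument on the linear system determined by a pair of optimal positional strategies shows that the coordinates of $\val(G)|_{V_R}$ are rationals whose denominators have bit length $O(r)$, so an additive error of $2^{-\Theta(r)}$ in $\hat v$ suffices for the continued-fraction reconstruction to output the exact values. Under the unit-cost model on $\Theta(r)$-bit numbers assumed in the theorem, each rounding and each arithmetic operation contributes only constant overhead per update.

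The main obstacle is proving that $O(r 2^r)$ iterations of $\ModifiedValueIteration$ suffice. Naive contraction arguments would give only a geometric convergence rate that depends on the graph structure and can be arbitrarily slow, yielding no such bound; instead I would aim to show that the algorithm's trajectory in $[0,1]^{V_R}$ visits only $O(r 2^r)$ distinct ``cells'' of the arrangement of hyperplanes defined by equalities of best-response values, and that the algorithm spends only an amortized constant number of iterations in each cell before either leaving it or triggering a successful $\KwekMehlhorn$ reconstruction. Establishing that only $O(r 2^r)$ cells rather than the naive $O(r!)$ are reachable — presumably by encoding the reachable cells as chains in a suitable subset lattice of size $2^r$ and exploiting a monotonicity property of the iteration — is the extremal combinatorial heart of the argument alluded to in the abstract. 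Exhibiting a matching family of worst-case $r$-coin-toss instances on which the bound is attained would then close the analysis.
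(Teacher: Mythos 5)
Your overall architecture --- retrograde analysis to reduce each step to the $r$ coin-toss values, value iteration on those, and Kwek--Mehlhorn reconstruction from a $2^{-\Theta(r)}$-accurate approximation using the $O(r)$-bit denominator bound --- matches the paper. But the heart of the proof, namely why $O(r2^r)$ iterations suffice, is where your proposal goes wrong, in two ways. First, your premise for abandoning a convergence-rate argument is false for this class of games: because every coin toss is uniform on two successors, one can show (this is the paper's Lemma~\ref{lem:ssg proof}) that any play reaching GOAL under an optimal Max strategy contains a sub-play visiting each position at most once, hence at most $r$ coin tosses, so conditioned on eventually reaching GOAL the play does so within the next $r$ coin tosses with probability at least $2^{-r}$. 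This yields $\val(G)_k - \val(G^{ir})_k \leq (1-2^{-r})^i$, i.e.\ a contraction rate depending only on $r$, not on the graph structure, and already gives an $O(r^2 2^r)$ iteration bound by elementary means. The rate is ``arbitrarily slow'' only for the generalized model with non-uniform transition probabilities, which is exactly why the paper restricts to Condon's original model.

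Second, the cell-counting mechanism you propose in its place cannot work. The bottleneck is not how many cells of the best-response arrangement the trajectory visits, but how long the iterates take to converge \emph{numerically} within a single cell. The paper's extremal instance $E_{n,r}$ (an absorbing chain where absorption requires $r$ consecutive tails) makes this concrete: the optimal strategies, and hence your ``cell,'' never change after the first iteration, yet $\Theta(r2^r)$ iterations are needed before the error drops below $2^{-5r}$, since the non-absorption probability after $t$ steps is $F^{(r)}_{t+2}/2^t$ with $F^{(r)}$ the Fibonacci $r$-step numbers. An amortized-constant-iterations-per-cell bound would therefore predict $O(1)$ iterations on this instance, which is false. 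The paper's actual route to the sharp $O(r2^r)$ bound is an extremal-combinatorics argument of a different kind: a sequence of value-decreasing local modifications (in the style of Moon--Moser) transforming any game in $S_{n,r}$ into $E_{n,r}$, showing $E_{n,r}$ has the slowest convergence among all games in the class, after which its convergence rate is computed explicitly. To repair your proof you would need either to carry out the direct conditioning argument above (settling for $O(r^2 2^r)$ iterations and a correspondingly weaker theorem) or to supply such an extremal argument; the cell-decomposition idea should be discarded.
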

The algorithm is based on combining a variant
of {\em value iteration} \cite{Shapley,Condon93} with {\em retrograde analysis} \cite{Bellman,DGG}.
We should emphasize that the time bound of Theorem \ref{main} 
is valid {\em only} for simple stochastic
games as originally defined by Condon. 
The algorithm of Gimbert and Horn (and also the algorithm of Dai and Ge, though this is not stated in their paper) actually applies 
to a generalized version of
simple stochastic games where coin toss positions are replaced with {\em chance positions} that are allowed arbitrary
out-degree and where a not-necessarily-uniform distribution is
associated to the outgoing arcs. The complexity
of their algorithm for this more general case is 
$O(r!(|E| + p))$, where $p$ is the maximum bit-length of a
transition probability (they only claim $O(r!(n|E|+p))$, but by
using retrograde analysis in their Proposition 1, the time is reduced by a factor of $n$). The algorithm of Dai and Ge has analogous expected complexity, with the $r!$ factor replaced with $\sqrt{r!}$. 
While our algorithm and the strategy improvement algorithm of Chatterjee {\em et al.} can be generalized to also work for these
generalized simple stochastic games, the dependence on the
parameter $p$ would be much worse - in fact 
exponential in $p$. It is an interesting open problem to get
an algorithm with a complexity polynomial in  $2^r$ as well as
$p$, thereby combining the desirable features of the
algorithms based on strategy iteration and value iteration 
with the features of the algorithm of
Gimbert and Horn.


\subsection{Organization of paper}
In Section \ref{sec-alg} we present the algorithm and show how the key
to its analysis is to give upper bounds on the difference 
between the value of a given simple stochastic game and the value of a {\em time
  bounded version} of the same game. In Section \ref{sec-time}, we
then prove such upper bounds. In fact, we offer two such upper bounds: One
bound with a relatively direct proof, leading to a variant of our
algorithm with time complexity $O(r^2 2^r (r + n \log n))$ and an
{\em optimal} bound on the difference in value, shown using techniques
from extremal combinatorics, leading to an algorithm with time
complexity $O(r 2^r (r + n \log n))$. In the Conclusion section, we
briefly sketch how our technique also yields an improved upper bound
on the time complexity of the strategy iteration algorithm of
Chatterjee {\em et al.} 

\section{The algorithm}
\label{sec-alg}

\subsection{Description of the algorithm}

Our algorithm for solving simple stochastic games with few coin toss positions is the algorithm of Figure \ref{alg-main}.
\begin{figure}
\begin{center}
\begin{function}[H]
$v \leftarrow (1,0,...,0)$\;
\For{$i \in \{1,2,\ldots,2 (\ln 2^{5\max(r,6)+1})  \cdot 2^{\max(r,6)} \}$}{
$v\leftarrow \mbox{\rm $\SolveDGG$}(G,v)$\;
$v' \leftarrow v$\;
$v_{k} \leftarrow (v'_j+v'_\ell)/2,\:\mbox{\rm for all\ } k\in V_{R}$, $v_j$ and $v_\ell$ being the two successors of $v_k$\;
Round each value $v_{k}$ {\em down} to $7 r$ binary digits\;
}
$v\leftarrow \mbox{\rm $\SolveDGG$}(G,v)$\;
$v \leftarrow \mbox{\rm $\KwekMehlhorn$}(v,4^{r}  )$\;
\Return{$v$}
\caption{SolveSSG($G$)}
\end{function}

\end{center}
\vspace*{-0.5cm}
\caption{\label{alg-main}Algorithm for solving simple stochastic games}
\vspace*{-0.3cm}
\end{figure}

In this algorithm, the vectors $v$ and $v'$ are real-valued vectors
indexed by the positions of $G$. We assume the GOAL position has
the index $0$, so $v=(1,0,...,0)$ is the vector
that assigns $1$ to the GOAL position and $0$ to all other positions.  
$\SolveDGG$ is the retrograde analysis based algorithm from Proposition 1 in 
Andersson {\em et al.} \cite{DGG}
for solving {\em deterministic graphical games}. Deterministic graphical
games are defined in a similar way as simple stochastic games, but they do not
have coin toss positions, and arbitrary real payoffs are allowed at terminals.
The notation $\SolveDGG(G, v')$ means
solving the 
deterministic graphical game obtained by replacing
each coin toss position $k$  of $G$ with a terminal with payoff $v'_k$,
and returning the value vector of this deterministic
graphical game. Finally, $\KwekMehlhorn$ is the algorithm of
Kwek and Mehlhorn \cite{kwek}. $\KwekMehlhorn(v, q)$ returns a vector where
each entry $v_i$
in the vector $v$ is replaced with the smallest fraction $a/b$ with
$a/b \geq v_i$ and $b \leq q$.

The complexity analysis of the algorithm is straightforward,
given the analyses of the procedures $\SolveDGG$ and $\KwekMehlhorn$ from \cite{DGG,kwek}.
There are $O(r 2^r)$ iterations, each requiring
time $O(r \log r + n)$ for solving the deterministic graphical game. 
Finally, the Kwek-Mehlhorn algorithm requires
time $O(r)$ for each replacement, and there are only $r$ replacements to be
made, as there are only $r$ different entries different from $1,0$ 
in the vector $v$,
corresponding to the $r$ coin toss positions, by standard properties
of deterministic graphical games \cite{DGG}.

\subsection{Proof of correctness of the algorithm}\label{sec-cor}
We analyse our main algorithm by first analysing properties of a simpler
non-terminating algorithm, depicted in Figure \ref{alg2}. We shall refer
to this algorithm as {\em modified value iteration}.

Let $v^t$ be the content of the vector $v$ immediately
after executing $\SolveDGG$ in the $(t+1)$'st iteration of the loop of
$\ModifiedValueIteration$ on input $G$. To understand this variant of value iteration, we may observe that the $v^t$ vectors can be given a ``semantics'' in terms of the value of a time bounded game. 

\begin{figure}
\begin{center}
\begin{procedure}[H]
$v \leftarrow (1,0,...,0)$\;
\While{$true$}{
$v\leftarrow \mbox{\rm $\SolveDGG$}(G,v)$\;
$v' \leftarrow v$\;
$v_{k} \leftarrow (v'_j+v'_\ell)/2,\:\mbox{\rm for all\ } k\in V_{R}$, $v_j$ and $v_\ell$ being the two successors of $v_k$\;
}
\caption{ModifiedValueIteration($G$)}
\end{procedure}
\end{center}
\vspace*{-0.5cm}
\caption{\label{alg2}Modified value iteration}
\vspace*{-0.3cm}
\end{figure}

\begin{definition}
Consider the ``timed modification'' $G^t$ of the game $G$ defined
as follows. The game is played as $G$, except that play
stops and player Max loses when the play has encountered $t+1$ (not
necessarily distinct) coin toss
positions. We let $\val(G^t)_k$ be the value of $G^t$ when play starts in position $k$.
\end{definition}

\begin{lemma}\label{lem-sem}
$\forall k,t: v^t_k=\val (G^t)_k.$
\end{lemma}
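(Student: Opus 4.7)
The plan is to prove the lemma by induction on $t$, interpreting each step of the loop so that it mirrors the backward-induction recursion for the finite-horizon game $G^t$. I would introduce an intermediate vector $w^{t+1}$ equal to the contents of $v$ immediately after the averaging step inside iteration $t+1$, so that $v^{t+1} = \SolveDGG(G,w^{t+1})$ and $w^{t+1}_k = (v^t_j+v^t_\ell)/2$ for each $k\in V_R$ with successors $j,\ell$. The target recursion for $\val(G^t)$, which I would state up front as a standard consequence of backward induction (equivalently, of Liggett--Lippman applied to $G^{t+1}$ viewed as a finite stochastic game whose Markov state includes the counter of remaining coin tosses), is the following: at each $k\in V_R$ we have $\val(G^t)_k = (\val(G^{t-1})_j+\val(G^{t-1})_\ell)/2$, while at each non-coin-toss $k$ the value $\val(G^t)_k$ coincides with the value of the deterministic graphical game obtained from $G$ by turning every coin toss position $m$ into a terminal with payoff $\val(G^t)_m$.

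For the base case $t=0$, I would note that the initial vector $(1,0,\ldots,0)$ sends GOAL to $1$ and every coin toss position to $0$, so $v^0$ is the value of the DGG in which each coin toss position is a $0$-terminal. At a coin toss the value of $G^0$ is indeed $0$, because arriving at such a position already counts as the first (and only allowed) coin toss encounter and therefore ends the play in a loss; at a non-coin-toss position both $v^0_k$ and $\val(G^0)_k$ describe the same purely deterministic race in which Max tries to reach GOAL before any coin toss is visited. Thus $v^0 = \val(G^0)$.

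For the inductive step, assume $v^t = \val(G^t)$. The averaging step applied at coin toss $k$ gives $w^{t+1}_k = (\val(G^t)_j+\val(G^t)_\ell)/2 = \val(G^{t+1})_k$ by the first half of the recursion. The subsequent call $\SolveDGG(G,w^{t+1})$ preserves these values at coin toss positions (they are treated as terminals) and extends them to the non-coin-toss positions as the value of precisely the DGG appearing in the second half of the recursion; hence $v^{t+1} = \val(G^{t+1})$. The step I expect to defend most carefully is the recursive characterization of $\val(G^t)$ at non-coin-toss positions: one must verify that optimal play in $G^{t+1}$ starting from such a position is deterministic up until the first coin toss, so that the three possible outcomes (GOAL with payoff $1$, some coin toss $m$ with continuation value $w^{t+1}_m$, or an infinite cycle losing for Max with payoff $0$) match exactly what $\SolveDGG$ computes. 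Once this identification is in place, the remainder of the argument is pure bookkeeping.
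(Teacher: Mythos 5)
Your proof is correct and follows the same route the paper intends: the paper's proof is just the one-line remark ``straightforward induction in $t$ (backwards induction)'', and your write-up fills in exactly that induction, with the right base case ($\val(G^0)_k=0$ at coin toss positions since arrival there already exhausts the budget) and the right recursion (averaging at coin toss positions, then the deterministic graphical game value elsewhere). The step you flag for careful defense—that $\val(G^{t+1})$ at non-coin-toss positions is the value of the DGG with coin toss positions turned into terminals of payoff $\val(G^{t+1})_m$—is indeed the crux, and it is the standard decomposition at the first coin toss encounter that the paper takes for granted.
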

\begin{proof}
Straightforward induction in $t$ (``backwards induction'').
\end{proof}

From the semantics offered by Lemma \ref{lem-sem} we immediately have
$\forall k ,t:\val (G^t)_k \leq \val (G^{t+1})_k$. Futhermore, it is true that 
$\lim_{t \rightarrow \infty} \val (G^t) = \val (G)$, where $\val (G)$ is the
value vector of $G$. This latter statement is very intuitive, given Lemma
\ref{lem-sem}, but might not be completely obvious. It may be
established rigorously as follows: 

\begin{definition}\label{def-unmodified value iteration}
For a given game $G$, let the game $\bar G^t$ be the following. The game is played as $G$, except that play stops and player Max loses when the play has encountered $t+1$ (not
necessarily distinct) {\em positions}. We let $\val(\bar G^t)_k$ be the value of $\bar
G^t$ when play starts in position $k$. 
\end{definition}
(We note that $\val(\bar G^t)$ is the valuation computed
after $t$ iterations of {\em unmodified} value iteration \cite{Condon93}.)
A very general theorem of
Mertens and Neyman \cite{MertensNeyman} linking the value of an
infinite game to the values of its time
limited versions implies that $\lim_{t \rightarrow \infty}
\val (\bar G^t) = \val (G)$. 
Also, we immediately see that for any $k$, 
$\val (\bar G^t)_k \leq \val (G^t)_k \leq \val (G)_k$, so we also have
$\lim_{t \rightarrow \infty} \val (G^t)_k = \val (G)_k$.

To relate $\SolveDGG$ of Figure \ref{alg-main} to modified value iteration of Figure \ref{alg2}, it turns out that we want to upper bound the smallest $t$ for which \[\forall i: \val(G)_i-\val(G^t)_i\leq 2^{-5r}.\] Let $T(G)$ be that $t$.  We will bound $T(G)$ using two
different approaches. The first, in Subsection \ref{sec: direct}, is
rather direct and is included to show what may be obtained using 
completely elementary means. It shows that $T(G)\leq 5(\ln 2)\cdot r^2 \cdot
2^r$, for any game $G$ with $r$ coin toss positions (Lemma~\ref{lem:direct time}). 

The second, in Subsection \ref{sec:
  extremal combinatorics}, identifies an {\em extremal} game (with
respect to convergence rate) with a given number of positions and coin
toss positions.  More precisely:

\begin{definition} \label{def-ext} Let $S_{n,r}$ be the set of simple
  stochastic games with $n$ positions out of which $r$ are coin toss
  positions. Let $G\in S_{n,r}$ be given. We say that $G$ is $t$-{\em
  extremal} if \[\max_i (\val(G)_i -\val(G^t)_i)=\max_{H\in S_{n,r}}\max_i (\val(H)_i-\val({H}^t)_i).\] We say that $G$ is {\em extremal} if it is $t$-extremal for all $t$.
\end{definition}

It is clear that $t$-extremal games exists for any choice of $n,r$ and
$t$. (That extremal games exists for any choice of $n$ and $r$ is shown
later in the present paper.) To find an extremal game, we use techniques from extremal combinatorics. By inspection of this game, we then get a better upper
bound on the convergence rate than that offered by the first
approach. We show using this approach that $T(G)\leq 2 (\ln 2^{5r+1})  \cdot 2^r$, for any game $G\in S_{n,r}$ (Corollary \ref{cor:howfast}).

Assuming that an upper bound on $T(G)$ is available, we are now ready to finish the proof of correctness of the main
algorithm. We will only do so explicitly for the bound on $T(G)$
obtained by the second approach from Subsection \ref{sec: extremal
  combinatorics} (the weaker bound implies correctness of a version of
the algorithm performing more iterations of its main loop).
From Corollary \ref{cor:howfast}, we have that for any game $G\in S_{n,r}$, $\val (G^T)_k$ and hence, modified value iteration, approximates $\val (G)_k$ within an additive error of $2^{-5r}$ for $t\geq 2 (\ln 2^{5r+1})  \cdot 2^r$ and $k$ being any position. 
$\SolveSSG$ differs from $\ModifiedValueIteration$ by rounding down
the values in the vector $v$ in each iteration. 
Let $\tilde v^t$ be the content of the vector $v$ immediately
after executing $\SolveDGG$ in the $t$'th iteration of the loop of
$\SolveSSG$.
We want to compare $\val (G^t)_k$ with $\tilde v^t_k$ for any $k$. 
As each number
is rounded down by less than $2^{-7r}$ in each iteration of the loop and recalling Lemma \ref{lem-sem},
we see by induction that 
\[\val (G^t)_k - t2^{-7r} \leq \tilde v^t_k \leq \val (G^t)_k.\]
In particular, when $t = 2 (\ln 2^{5r+1})  \cdot 2^r$, we have
that $\tilde v^t_k$ approximates $\val (G)_k $ within $2^{-5r} +
2 (\ln 2^{5r+1})  \cdot 2^r 2^{-7r} < 2^{-4r}$, for any $k$, as we can assume $r \geq 6$ by the code of $\SolveSSG$ of Figure \ref{alg-main}.

Lemma 2 of Condon \cite{Condon92} states that the value of a position
in a simple stochastic game with $n$ non-terminal positions can be
written as a fraction with integral numerator and 
denominator at most $4^n$. As pointed out by 
Chatterjee {\em et al.} \cite{C09}, it is straightforward to see that
her proof in fact gives an upper bound of $4^r$, where $r$
is the number of coin toss positions. It is well-known that two distinct fractions
with denominator at most $m \geq 2$ differ by at least $\frac{1}{m(m-1)}$.
Therefore, since $\tilde v^t_k$
  approximates $\val (G)_k$ within $2^{-4r}< \frac{1}{4^r\cdot (4^r-1)}$ from below,
we in fact have that $\val (G)_k$ is the smallest
rational number $p/q$ so that $q \leq 4^r$ and $p/q \geq \tilde v^t_k$. 
Therefore, the Kwek-Mehlhorn algorithm applied to
$\tilde v^t_k$ correctly computes $\val (G)_k$, and we are done.

We can not use the bound on $T(G)$ obtained by the first direct approach (in
Subsection \ref{sec: direct}) to show the correctness of $\SolveSSG$, but we can show the correctness of the version of
it that runs the main loop an additional factor of $O(r)$ times, that is,
$i$ should range over $\{1,2,\dots,5(\ln 2) \cdot r^2 2^{r}\}$ instead of over $\{1,2,\dots,2 (\ln 2^{5r+1})  \cdot 2^r\}$.

%
%


\section{Bounds on the convergence rate\label{sec-time}}

\subsection{A direct approach\label{sec: direct}}

\begin{lemma}\label{lem:ssg proof}Let $G\in S_{n,r}$ be given. For all positions 
$k$ and all integers $i \geq 1$, we have \[\val (G)_k-\val (G^{i\cdot r})_k 
\leq 
(1-2^{-r})^i.\]
\end{lemma}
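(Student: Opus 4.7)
The plan is to view the gap $\val(G)_k-\val(G^{ir})_k$ as the probability, in a Markov chain coming from optimal positional strategies, of reaching GOAL only after many coin toss encounters, and then to bound that probability by a combinatorial argument on an induced chain over coin toss positions. Fix an optimal positional strategy $x^*$ for Max in $G$. Since $x^*$ is one particular strategy available to Max in $G^{ir}$ and since Min has a positional best reply against $x^*$ in each game, we have $\val(G^{ir})_k\geq \min_y u^k_{ir}(x^*,y)$ while $u^k(x^*,y)\geq \val(G)_k$ for every positional Min strategy $y$. Combining these gives
\[ \val(G)_k-\val(G^{ir})_k \;\leq\; \max_y\bigl[\,u^k(x^*,y)-u^k_{ir}(x^*,y)\,\bigr], \]
where $y$ ranges over positional Min strategies. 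For each such $y$, the quantity $u^k(x^*,y)-u^k_{ir}(x^*,y)$ is precisely the probability, in the Markov chain $M_y$ induced by $(x^*,y)$ started at $k$, that the play reaches GOAL but does so only after strictly more than $ir+1$ coin toss encounters.

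For each $y$ I would then bound this probability by passing to the reduced Markov chain on $M_y$ whose state at reduced-time $t$ is the position visited at the $t$-th coin toss encounter, collapsing the deterministic segments between coin tosses, with absorbing classes for GOAL, the losing terminal, and any closed sets of non-coin-toss positions in which the deterministic play traps the pebble forever. Let $T^*$ be the set of coin toss positions from which GOAL is reachable in this reduced chain. The key combinatorial observation is that from any state in $T^*$, a shortest path to GOAL visits at most $|T^*|\leq r$ distinct coin tosses and hence uses at most $r$ reduced-chain edges; since each reduced-chain transition has its two outcomes with probability $1/2$ each, that specific path is traversed with probability at least $2^{-r}$. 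In particular, from every state of $T^*$, the reduced chain leaves $T^*$ within $r$ steps with probability at least $2^{-r}$. Moreover, no state outside $T^*$ can reach GOAL, hence cannot re-enter $T^*$ either (else GOAL would be reachable from it via $T^*$), so once the chain leaves $T^*$ it never returns.

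Iterating the leaving bound gives $\Pr[X_{jr}\in T^*]\leq (1-2^{-r})^j$ by a one-line induction on $j$, and by the no-return property $\Pr[X_t\in T^*]$ is nonincreasing in $t$, so $\Pr[X_{ir+1}\in T^*]\leq (1-2^{-r})^i$. Reaching GOAL after more than $ir+1$ coin tosses forces the reduced chain to still be in $T^*$ at time $ir+1$, hence $u^k(x^*,y)-u^k_{ir}(x^*,y)\leq (1-2^{-r})^i$ for every positional $y$; combining with the first paragraph closes the proof. The delicate step is the opening reduction: since Max's optimal strategies in $G$ and $G^{ir}$ typically differ, one cannot directly compare $\val(G)$ and $\val(G^{ir})$ on a single Markov chain, and it is essential to freeze Max at $x^*$ so that both $u^k(x^*,y)$ and $u^k_{ir}(x^*,y)$ live on the common chain $M_y$ on which the pathwise argument in the second paragraph applies.
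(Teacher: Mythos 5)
Your overall strategy --- freeze Max at an optimal positional strategy $x^*$, pass to the induced chain, and argue that each block of $r$ coin-toss encounters escapes the ``GOAL-still-reachable'' set with probability at least $2^{-r}$ --- is the same quantitative engine as the paper's proof, and your analysis of the reduced chain for a \emph{positional} Min reply $y$ (the set $T^*$, the no-return property, the shortest path through at most $r$ distinct coin toss positions) is sound. The gap is in the opening reduction. You assert that Min has a positional best reply to $x^*$ ``in each game,'' and you need this for $G^{ir}$ in order to conclude
\[
\val(G)_k-\val(G^{ir})_k\ \leq\ \max_{y\ \mathrm{positional}}\bigl[\,u^k(x^*,y)-u^k_{ir}(x^*,y)\,\bigr].
\]
But $G^{ir}$ is a \emph{timed} game: once Max is frozen, Min faces an MDP whose natural state is the pair (position, number of coin tosses consumed so far), and an optimal reply is positional only in that extended state space. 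A Min position can be reached with different amounts of remaining budget, and the preferred successor can flip accordingly (e.g., a slow route that surely reaches GOAL after many coin tosses is bad for Min with a large budget remaining but good with a small one), so restricting Min to strategies depending only on the position can be strictly suboptimal in $G^{ir}$. Since $\min$ over the smaller class of positional strategies is only $\geq$ the unrestricted $\min$, you cannot conclude $\val(G^{ir})_k\geq\min_{y\ \mathrm{positional}}u^k_{ir}(x^*,y)$, and the displayed inequality does not follow. The strategy $y_0$ that actually witnesses $\val(G^{ir})_k$ may be history-dependent, and for such a $y_0$ your set $T^*$, the no-return property, and the homogeneous escape bound are all undefined or fail, so the pathwise argument cannot be applied to the one strategy you actually need to handle.

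The paper avoids this by working from the start with an arbitrary pure (possibly non-positional) Min reply $y$, subject only to the condition that $y$ never lets play reach GOAL after visiting a value-$0$ position; the bound is then instantiated with the genuine optimal reply in $G^{ir}$. The price is that ``GOAL is reachable from here using at most $r$ coin tosses'' can no longer be read off as a shortest path in a Markov chain: under a history-dependent $y$ a play may revisit positions with different choices. The paper's Claim supplies the substitute: if all GOAL-reaching plays under $(x^*,y)$ repeated some position, one could modify $y$ to repeat its first choice at every revisited Min position, splice out the loops, and obtain a reply against which $x^*$ reaches GOAL with probability $0$, contradicting optimality of $x^*$ at a positive-value position. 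To repair your proof you either need that surgery argument (or some other use of the optimality of $x^*$) to get the $2^{-r}$ escape probability uniformly over histories, or a separate proof that a positional reply is optimal for Min in $G^{ir}$ against $x^*$ --- the latter is not true in general.
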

\begin{proof}
If $\val (G)_k = 0$, we also have $\val (G^{i\cdot r})_k = 0$, so the inequality
holds. Therefore, we can assume that $\val (G)_k > 0$.

Fix some optimal positional strategy, $x$, for Max in $G$. Let
$y$ be any pure (i.e., deterministic, but not necessarily
positional) strategy for Min with the
property that $y$ guarantees that the pebble will not reach GOAL after
having been in a position of value 0 (in particular, any best reply 
to any strategy of Max, including $x$, clearly
has this property). 

The two strategies $x$ and $y$ together induce a probability
space $\sigma_k$ on the set of plays of the game, starting in position $k$. Let the probability measure on plays of $G^t$ associated with
this strategy be denoted $\Pr_{\sigma_k}$. Let
$W_k$ be the event that this random play reaches GOAL. 
We shall also consider the event $W_k$ to be a set of plays. 
Note that any position occurring in any
play in $W_k$ has non-zero value, by definition of $y$.

{\em Claim.}
There is a play in $W_k$ where each position
occurs at most once.

{\em Proof of Claim.} Assume to the contrary that for all plays in $W_k$,
some position occurs at least twice. Let $y'$ be the modification of
$y$ where the second time a position, $v$, in $V_2$ is entered in a
given play, $y$ takes the same action as was used the first time
$v$ occurred. Let $W'$ be the set of plays generated by $x$
and $y$ for which the pebble reaches GOAL. We claim that $W'$ is in fact 
the empty
set. Indeed, if $W'$ contains any play $q$, we can obtain a play in 
$W'$ where each position occurs only once, by removing all transitions in
$q$ occurring between repetitions of the same position. Such a play 
is also an element of $W_k$, contradicting the assumption 
that all plays in $W_k$ has a position occurring twice. 
The emptiness of $W'$ shows that the strategy
$x$ does not guarantee that GOAL is reached with positive
probability, when play starts in $k$. This contradicts either
that $x$ is optimal or that $\val (G)_k > 0$. We therefore conclude that
our assumption is incorrect, and that there is a
play $q$ in $W_k$ where each position occurs only once, as desired. {\em This completes the proof of the claim.}

The
probability according to the probability measure $\sigma_k$ that a
given play where each coin toss position occurs only once occurs, 
is at least $2^{-r}$. 

Let $W_k^i$ be the set of plays in $W_k$ that contains at most $i$
occurrences of coin toss 
positions (and also let $W_k^i$ denote the corresponding event with respect to the measure $\sigma_k$). 
Since the above claim holds for any position $k$ of non-zero value
and plays in $W_k$ only visits positions of non-zero value, we
see that $\Pr_{\sigma_k}[\neg W_k^{i\cdot r}| W_k] \leq (1-2^{-r})^i$, for any
$i$. Since $x$ is optimal, we also have  $\Pr_{\sigma_k}[W_k] \geq \val (G)_k $. Therefore,
\begin{eqnarray*}
\Pr_{\sigma_k}[W_k^{i\cdot r}] & = & \Pr_{\sigma_k}[W_k] - \Pr_{\sigma_k}[\neg W_k^{i\cdot r}|W_k] \Pr_{\sigma_k}[W_k]\\
&\geq& \val (G)_k -(1-2^{-r})^i 
\end{eqnarray*}
The above derivation is true for any $y$ guaranteeing that no play can
enter a position of value 0 and then reach GOAL, and therefore it is also
true for $y$ being the optimal strategy in the time-limited game,
$G^{i\cdot r}$. In that case, we have $\Pr_{\sigma_k} [W_k^{i\cdot r}] \leq \val (G^{i\cdot r})_k$. 
We can therefore conclude that $\val (G^{i\cdot r})_k \geq \val (G)_k-(1-2^{-r})^i$, as desired.
\end{proof}

\begin{lemma}\label{lem:direct time}
Let $G\in S_{n,r}$ be given. 
\[T(G)\leq 5(\ln 2)\cdot r^2 \cdot 2^r\]
\end{lemma}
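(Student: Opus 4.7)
The plan is to derive the bound as a direct consequence of Lemma \ref{lem:ssg proof}. By definition, $T(G)$ is the smallest $t$ such that $\val(G)_k - \val(G^t)_k \leq 2^{-5r}$ for every position $k$. Lemma \ref{lem:ssg proof} gives $\val(G)_k - \val(G^{i\cdot r})_k \leq (1-2^{-r})^i$, so it suffices to find an $i$ with $(1-2^{-r})^i \leq 2^{-5r}$ and then take $t = i \cdot r$.

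First, I would take natural logarithms of the desired inequality, reducing it to $i \cdot (-\ln(1-2^{-r})) \geq 5r \ln 2$. Next, I would use the elementary inequality $-\ln(1-x) \geq x$ valid for $x \in (0,1)$, applied with $x = 2^{-r}$, to conclude that it is enough to have $i \cdot 2^{-r} \geq 5r \ln 2$, i.e.\ $i \geq 5 (\ln 2) \cdot r \cdot 2^r$. Choosing $i = \lceil 5 (\ln 2)\cdot r \cdot 2^r \rceil$ then gives $t = i \cdot r \leq 5(\ln 2) \cdot r^2 \cdot 2^r$ (absorbing the ceiling into the constant, or equivalently assuming $r \geq 1$ so the rounding is negligible compared to the dominant term).

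There is no real obstacle here; the entire content lies in Lemma \ref{lem:ssg proof}, and this lemma is a routine logarithmic conversion on top of it. The only mild subtlety is making sure that rounding $i$ up to an integer does not break the stated bound, which is handled by the slack already present in the inequality $-\ln(1-x) \geq x$ (strict for $x > 0$), so the stated constant $5(\ln 2)$ is comfortably sufficient.
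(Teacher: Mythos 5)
Your proposal is correct and follows essentially the same route as the paper: both reduce to Lemma \ref{lem:ssg proof} and then use the inequality $-\ln(1-x)\geq x$ (the paper phrases it as $(1-2^{-r})^{2^r}<e^{-1}$) to solve for the number of iterations. The integrality/rounding issue you flag is a genuine but minor technicality that the paper glosses over symmetrically (it substitutes $t/r$ as an exponent without flooring), and your remark that the strict slack in $-\ln(1-x)>x$ absorbs the ceiling is the right way to dispose of it.
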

\begin{proof}
We will show that for any $t\geq 5(\ln 2)\cdot r^2 \cdot 2^r$ and any $k$, we have that $\val (G)_k -\val (G^{t})_k <2^{-5r}$. 

From Lemma \ref{lem:ssg proof} we have that $\forall i,k:
\val(G)_k-\val(G^{i\cdot r})_k <(1-2^{-r})^{i}$. Thus,
\[ \val(G)_k-\val(G^{t})_k < (1-2^{-r})^{t/r} =((1-2^{-r})^{2^r})^\frac{t}{r\cdot 2^r}<e^{-\frac{t}{r\cdot 2^r}}
\leq e^{-\frac{5 (\ln 2) r^2 2^r}{r\cdot 2^r}}
= 2^{-5r}. \]
\end{proof}

\subsection{An extremal combinatorics approach\label{sec: extremal combinatorics}}
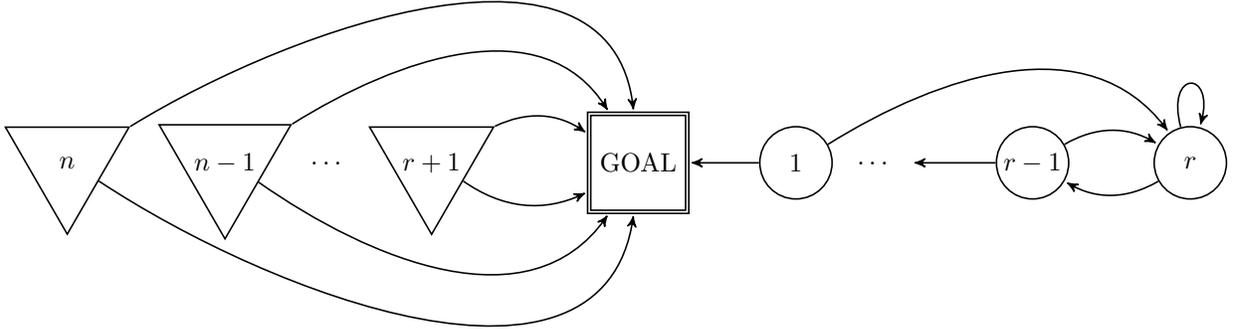
\begin{figure}
\begin{centering}
\resizebox{\textwidth}{!}{
\begin{tikzpicture}[->,>=stealth',shorten >=1pt,auto,node distance=5.5cm*0.5,
                    semithick,scale=0.2 ]
      
\tikzstyle{every state}=[fill=white,draw=black,text=black,font=\small , inner sep=-0.05cm]
\tikzstyle{max}=[state,regular polygon,regular polygon sides=3,minimum height=1.9cm,minimum width=1.9cm]
\tikzstyle{min}=[max,regular polygon rotate=180]

    \node[min] (m){$n$};
    \node[min] (m1)[right of=m,node distance=4.2cm*0.5,]{$n-1$};
    \node[state,draw=white,node distance=5.5cm*0.25] (dots1) [right of=m1] {$\dots$};
    \node[min] (m2) [right of=m1] {$r+1$};
    \node[state,accepting,regular polygon,regular polygon sides=4] (perp) [right of=m2] {GOAL};

    \node[state,node distance=4.2cm*0.5] (q1) [right of=perp] {$1$};
    \node[state,draw=white,node distance=4.2cm*0.25] (dots) [right of=q1] {$\dots$};
    \node[state,node distance=4.2cm*0.5] (q2) [right of=dots] {$r-1$};

    \node[state,node distance=4.2cm*0.5] (q3) [right of=q2] {$r$};
\path 
(m)  edge [bend right,in=-95] (perp)
(m)  edge[bend left,in=95] (perp)
(m1)  edge [bend right,in=-120] (perp)
(m1)  edge[bend left,in=120] (perp)
(m2)  edge [bend right] (perp)
(m2)  edge[bend left] (perp)

(q1) edge (perp)
(q1) edge[bend left,in=125] (q3)
(q2) edge (dots)
(q2) edge[bend left] (q3)
(q3) edge [loop above]
(q3) edge[bend left] (q2);

\end{tikzpicture}
}
\end{centering}
\vspace*{-1.3cm}
\caption{The extremal game $E_{n,r}$.
Circle nodes are coin toss positions, triangle nodes are Min positions
and the node labeled GOAL is 
the GOAL position.}
\label{fig-extreme}
\end{figure}
The game of Figure \ref{fig-extreme} is a game in $S_{n,r}$. We will refer to this game as
$E_{n,r}$. $E_{n,r}$ consists of no Max-positions. Each Min-position in $E_{n,r}$ has GOAL as a successor twice. The $i$'th coin toss position, for $i\geq 2$, has the $(i-1)$'st and the $r$'th coin toss position as successors. The first coin toss position has GOAL and the $r$'th coin toss position as successors. The game is very similar to a 
simple stochastic game used as an example by Condon \cite{Condon93}
to show that unmodified value iteration converges slowly.

In this subsection we will show that $E_{n,r}$ is an extremal game in
the sense of Definition \ref{def-ext} and upper bound $T(E_{n,r})$,
thereby upper bounding $T(G)$ for all $G\in S_{n,r}$. 

The first two lemmas in this subsection concerns assumptions about $t$-extremal games we can make without loss of generality.
\begin{lemma}\label{lem:max}For all $n,r,t$, there is a $t$-extremal game in $S_{n,r}$ with
$V_1 = \emptyset$, i.e., without containing positions belonging to
Max.
\end{lemma}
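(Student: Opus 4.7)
The plan is to show that any $t$-extremal game can be transformed into one without Max positions while preserving (or enlarging) the extremal gap. Let $G \in S_{n,r}$ be $t$-extremal and fix an optimal positional strategy $x^*$ for Max in $G$, which exists by the Liggett--Lippman theorem cited in the introduction. Exploiting the fact that $G$ is allowed to be a multigraph, I would construct $G'$ from $G$ by replacing every Max position $v \in V_1$ with a Min position whose two outgoing arcs both point to $x^*(v)$. By construction $G' \in S_{n,r}$, has the same numbers of positions and coin toss positions as $G$, and satisfies $V_1 = \emptyset$.

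Two inequalities then need to be verified, for every position $k$: first, $\val(G')_k = \val(G)_k$, and second, $\val({G'}^t)_k \leq \val(G^t)_k$. For the first, observe that the former-Max positions of $G'$ offer no effective choice, so playing $G'$ is equivalent to fixing Max to use $x^*$ in $G$ and letting Min best-respond; by optimality of $x^*$, the resulting value at $k$ is exactly $\val(G)_k$. For the second, ${G'}^t$ is precisely $G^t$ with Max restricted to playing the positional strategy $x^*$, so Max's guaranteed value against an optimal Min in ${G'}^t$ is a fortiori achievable in $G^t$.

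Subtracting these two facts yields $\val(G')_k - \val({G'}^t)_k \geq \val(G)_k - \val(G^t)_k$ for every $k$, so taking the maximum over $k$ shows $G'$ attains an extremal gap at least as large as $G$'s. Since $G$ is $t$-extremal and $G' \in S_{n,r}$, the inequality must be an equality, and so $G'$ is itself $t$-extremal with $V_1 = \emptyset$, as claimed.

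The main subtlety lies in the asymmetry between the two inequalities used: $x^*$ is optimal in $G$ but not necessarily in $G^t$, so committing Max to $x^*$ is costless in the untimed game (giving equality) but may hurt him in the time-bounded game (giving only an inequality). It is exactly this asymmetry --- the existence of a single positional strategy simultaneously optimal for every starting position in $G$ but not in the time-bounded variants --- that forces the extremal gap to weakly increase under the transformation, which is the conceptual heart of the argument.
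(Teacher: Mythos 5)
Your proposal is correct and follows essentially the same route as the paper: fix an optimal positional Max strategy, hard-wire it by converting each Max position into a Min position with both arcs pointing to the chosen successor, and observe that the untimed values are preserved while the time-bounded values can only decrease, so the gap weakly increases and the new game is still $t$-extremal. The only cosmetic difference is that you make both arcs point to $x^*(v)$ whereas the paper keeps both arcs realizing the choice specified by $x$; the argument and the two key inequalities are identical.
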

\begin{proof}
Take any $t$-extremal game $G\in S_{n,r}$. Let $x$ be an optimal positional strategy
for Max in this game. Now replace each position belonging to Max with
a position belonging to Min with both outgoing arcs making the 
choice specified by $x$. Call the resulting game $H$. We claim that
$H$ is also $t$-extremal. First, clearly, each position $k$ of $H$ has the
same value as is has in $G$, i.e., $\val (G)_k = \val (H)_k$. 
Also, if we compare the values of
the positions of the games $H^t$
and $G^t$ defined in the statement of Lemma \ref{lem-sem},
we see that $\val (H^t)_k \leq \val (G^t)_k$, since the only difference between
$H^t$ and $G^t$ is that player Max has more options in the latter game.
Therefore, $ \val (H)_k  -  \val (H^t )_k \geq  \val (G)_k - \val (G^t)_k$ so 
$H$ must also be $t$-extremal.
\end{proof}

\begin{lemma}
\label{lem:Value-1}For all $n,r,t$, there 
exists a $t$-extremal game in $S_{n,r}$, where all positions have value one and where
no positions belong to player Max.
\end{lemma}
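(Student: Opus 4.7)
The plan is to start from a $t$-extremal game $G\in S_{n,r}$ with no Max positions (which exists by Lemma~\ref{lem:max}) and then transform it into another $t$-extremal game $G'$ in which additionally every position has value one. The transformation will turn positions of value less than one into trivial ``go straight to GOAL'' positions, while preserving the structure of the ``value-one subgame'' of $G$.

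First I would define $W := \{k\in V : \val(G)_k = 1\}\cup\{\mathrm{GOAL}\}$ and observe that $W$ is closed under successor arcs. Since $G$ has no Max positions, this uses only two facts: a Min position $k\in W$ must have both successors in $W$ (otherwise Min would choose the successor of smaller value, forcing $\val(G)_k<1$), and a coin toss position $k\in W$ must have both successors in $W$ (since $(a+b)/2 = 1$ with $a,b\in[0,1]$ forces $a=b=1$). In particular the subgame induced on $W$ is well-defined and the value of each position in this subgame is still one.

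Next I would construct $G'$ by keeping the arcs of every position in $W$ unchanged and redirecting both outgoing arcs of every position in $V\setminus W$ to GOAL. Then $G'\in S_{n,r}$, because we have changed neither the set of positions nor the position types, and every position of $G'$ has value one: positions in $W$ retain value one (the $W$-subgame is untouched), and positions outside $W$ reach GOAL in one move. Because arcs out of $W$ stay in $W$, an easy induction on $t$ using Lemma~\ref{lem-sem} gives $\val(G'^t)_k = \val(G^t)_k$ for every $k\in W$. Consequently, $\val(G')_k - \val(G'^t)_k = 1 - \val(G^t)_k = \val(G)_k - \val(G^t)_k$ for every $k\in W$.

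The remaining step, and the main obstacle, is to show that the maximum of $\val(G)_k - \val(G^t)_k$ over all $k$ is attained at some $k\in W$, so that the above equality yields $\max_k(\val(G')_k - \val(G'^t)_k) \ge \max_k(\val(G)_k - \val(G^t)_k)$ and hence $G'$ is $t$-extremal. I would try to prove this by decomposing, under Min's optimal strategy for $G^t$, the play from any $k\notin W$ according to its first-entry position $w$ and its first-entry coin-toss count $T_C$ into $W$ (using that every GOAL-reaching play must enter $W$, because $W$ is absorbing and every $w\in W$ has value one under any Min strategy); this writes $\val(G^t)_k$ as a convex combination involving $\val(G^{t-T_C})_w$, and an averaging step is supposed to exhibit some $w\in W$ with deficit $1-\val(G^t)_w$ at least $\val(G)_k - \val(G^t)_k$. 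The subtle point is that the natural decomposition yields $\val(G^{t-T_C})_w$ while we want $\val(G^t)_w$, and monotonicity of $\val(G^s)_w$ in $s$ pushes the wrong way; circumventing this is the technical heart of the argument and may require either a sharper extremality analysis or a more refined version of the construction of $G'$ in which non-$W$ positions are used as deterministic Min ``delay gadgets'' rather than collapsed to GOAL.
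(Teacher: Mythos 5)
There is a genuine gap, and you have correctly located it yourself: your construction only controls the quantity $\val(G')_k-\val({G'}^t)_k$ on the closed set $W$ of value-one positions (where it equals $\val(G)_k-\val(G^t)_k$) and makes it zero everywhere else, so the whole argument hinges on the unproven claim that a $t$-extremal $G$ attains $\max_k(\val(G)_k-\val(G^t)_k)$ at some $k\in W$. The preliminary steps are fine ($W$ is successor-closed since $G$ has no Max positions, and induction via Lemma~\ref{lem-sem} gives $\val({G'}^t)_k=\val(G^t)_k$ on $W$), but the missing claim is exactly the hard part, and your sketched decomposition does run into the obstruction you describe: conditioning on the first entry into $W$ produces terms $\val(G^{t-T_C})_w$ with a \emph{reduced} time budget, and since $\val(G^s)_w$ is nondecreasing in $s$ the resulting bound goes the wrong way. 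As stated, the proof does not close.

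The paper avoids this difficulty by transforming the game differently. Instead of collapsing the positions of value less than one onto GOAL (which forces you to argue about \emph{where} the maximal deficit sits), it takes $N$ to be the set of positions of value $0$ (nonempty whenever some value is below one, since there are no Max positions) and redirects every arc \emph{into} $N$ to GOAL, obtaining a game $H$ in which all positions have value one. The key point is that the inequality $\val(H)_k-\val(H^t)_k\geq \val(G)_k-\val(G^t)_k$ is then proved for \emph{every} position $k$, by a strategy-transfer argument: if $\sigma_k$ is Min-optimal in $G^t$, then $\Pr_{\sigma_k}[\mbox{play reaches neither GOAL nor }N]\geq \val(G)_k-\val(G^t)_k$, and plays of $H^t$ reaching GOAL correspond to plays of $G^t$ reaching GOAL or $N$, so $1-\val(H^t)_k$ dominates that probability. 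Because the deficit increases pointwise, $t$-extremality of $H$ is immediate and no localization of the maximum is needed. If you want to salvage your route, you would essentially have to prove the paper's pointwise inequality anyway, so I recommend switching to redirecting arcs into the value-zero set rather than rerouting the sub-one positions to GOAL.
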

\begin{proof}
By Lemma \ref{lem:max}, we can pick a $t$-extremal game $G$ in $S_{n,r}$ where
no positions belong to player Max.  
Suppose that not all positions in $G$ have value
1. Then, it is easy to see that the set of positions of value $0$ is non-empty. Let this
set be $N$. Let $H$ be the game where all arcs into $N$
are redirected to GOAL. Clearly, all positions in this game have
value $1$. We claim that $H$ is also $t$-extremal.

Fix a position $k$. 
We shall show that $\val (H)_k - \val (H^t)_k \geq \val (G)_k - \val (G^t)_k$ and
we shall be done.
Let $\sigma_k$ be a (not necessarily positional) optimal strategy for
player Min in $G^t$ for plays starting in $k$ 
and let the probability measure on plays of $G^t$ associated with
this strategy be denoted $\Pr_{\sigma_k}$.
As $\sigma_k$ is also a strategy that can be played in $G$,
we have $\Pr_{\sigma_k}[\mbox{\rm Play does not reach $N$}] \geq \val (G)_k $.
Also, by definition, $\Pr_{\sigma_k}[\mbox{\rm Play reaches GOAL}] = \val (G^t)_k$.
That is,
\[ \Pr_{\sigma_k}[\mbox{\rm Play reaches neither GOAL nor $N$}] \geq \val
(G)_k- \val (G^t)_k. \]
Let $\bar \sigma_k$ be an optimal strategy for plays starting in $k$ for
player Min in $H^t$.
This strategy can also be used in $G^t$. Let
the probability distribution on plays of $G^t$ associated with
this strategy be denoted $\Pr_{\bar \sigma_k}$.
Note that plays reaching GOAL in $H^t$ correspond to 
those plays reaching
either GOAL or $N$ in $G^t$. Thus,
by definition, 
$\Pr_{\bar \sigma_k}[\mbox{\rm Play reaches neither GOAL nor $N$}] = 1 - \val(H^t)_k$. As $\sigma_k$ can be used in $H^t$ where $\bar \sigma_k$ is
optimal, we have \[1 - \val(H^t)_k \geq \val (G)_k- \val (G^t)_k.\] But since
$\val(H)_k  = 1$, this is the desired inequality 
\[\val(H)_k - \val(H^t)_k \geq \val (G)_k- \val (G^t)_k.\]
\end{proof}


The next lemma will be used to derive a ordering of the positions in any game $G$ satisfying the restrictions of Lemma \ref{lem:Value-1}.

\begin{lemma}\label{Hlemma}
Let $G$ be a game without Max positions in which all positions 
have value one. Let $V'$ be a non-empty set of positions of $G$ that
does not include GOAL. Then, at least one of the following two cases hold:
\begin{enumerate}
\item{} $V'$ contains a Min position
with both successors outside of $V'$ or
\item{} $V'$ contains a coin toss position with at least one
successor outside of $V'$.
\end{enumerate}
\end{lemma}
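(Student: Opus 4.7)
The plan is to argue by contraposition: assume that neither case (1) nor case (2) holds for $V'$, and derive that some position in $V'$ must have value strictly less than one, contradicting the hypothesis that every position of $G$ has value one.

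First I would extract what the failure of both cases gives us structurally. If case (1) fails, then every Min position $k \in V'$ has at least one successor inside $V'$ (because not both of its successors lie outside $V'$). If case (2) fails, then every coin toss position $k \in V'$ has both successors inside $V'$. Combined, this means: from every position of $V'$, Min has at least one move staying inside $V'$, and coin tosses never leave $V'$. Since by hypothesis $G$ has no Max positions, these are the only two kinds of non-terminal positions to consider.

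Next I would use this structural observation to construct an explicit positional strategy $y$ for Min that traps play inside $V'$. At each Min position $k \in V'$, let $y(k)$ be an outgoing arc whose head is in $V'$ (well-defined by the previous step). At Min positions outside $V'$, $y$ can be defined arbitrarily. Starting the pebble at any $k \in V'$, a straightforward induction on the length of the play shows that under $y$ the pebble stays in $V'$ with probability one: Min's own moves remain in $V'$ by construction, and coin toss moves remain in $V'$ because both outgoing arcs from any coin toss position of $V'$ lead into $V'$.

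Finally, since $V'$ does not contain GOAL, no play consistent with $y$ starting at $k \in V'$ ever reaches GOAL, so Max's probability of reaching GOAL under $y$ is $0$ for every strategy of Max. Hence $\val(G)_k \leq 0$ for every $k \in V'$, contradicting the assumption that every position has value one. The only potentially subtle point is making sure the trapping argument is clean when $V'$ contains coin toss positions, but since we only need that no arc out of $V'$ is ever taken (both for Min's strategy choice and for Nature's uniformly random choice), the contradiction goes through directly. Since $V'$ was assumed non-empty, this completes the contrapositive, proving the lemma.
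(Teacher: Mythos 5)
Your proof is correct and is essentially the paper's own argument: the paper also proceeds by contradiction, observing that if neither case holds then Min can force play to stay inside $V'$ forever, so every position of $V'$ has value $0$, contradicting the all-values-one hypothesis. You have merely spelled out the trapping strategy and the induction that the paper leaves implicit.
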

\begin{proof}
Suppose not. Then the Min-player can force play to stay within $V'$ when
play starts in $V'$.
Thus, the values of all positions in $V'$ are $0$,
a contradiction.
\end{proof}

The following lemma will be used several times
to change the structure of a game while only making it more extremal,
eventually making the game into the specific game $E_{n,r}$
(in the context of extremal combinatorics, this is a standard technique
pioneered by Moon and
Moser \cite{MoonMoser}).
\begin{lemma}
\label{lem:change-child}
Given a game $G$. Let $c$ be a coin toss position in $G$
and let $k$ be an immediate 
successor position $k$ of $c$. Also, let a position $k'$ 
with the following property be given:
$\forall t: \val (G^t)_{k'} \leq \val (G^t)_{k}$. 
Let $H$ be the game where the arc from $c$ to $k$ is
redirected to $k'$. Then, $\forall t,j:  \val (H^t)_{j} \leq \val (G^t)_{j}$.
\end{lemma}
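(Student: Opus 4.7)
The plan is to prove the lemma by induction on $t$, exploiting the backward-induction characterization of $\val(F^t)$ implicit in Lemma~\ref{lem-sem} for any game $F$: first assign each coin toss position $c$ with successors $j_1,j_2$ the value $\tfrac{1}{2}(\val(F^{t-1})_{j_1}+\val(F^{t-1})_{j_2})$, and then solve the deterministic graphical game obtained from $F$ by treating the coin toss positions as terminals holding these values. Crucially, $G$ and $H$ differ only at one outgoing arc at the coin toss position $c$, so their underlying deterministic graphical games (once coin toss positions become terminals) are literally identical; only the coin toss successor structure at $c$ differs. The base case $t=0$ is then immediate, since in both $G^0$ and $H^0$ play terminates at the first coin toss visit, and so $\val(G^0)_j=\val(H^0)_j$ for every $j$, no such arc being traversed.

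For the inductive step, assume $\val(H^{t-1})_j\leq\val(G^{t-1})_j$ for every position $j$. I first verify the desired inequality at every coin toss position $c'$ of $H$. If $c'\neq c$, then $c'$ has the same two successors in $H$ as in $G$, and averaging immediately yields $\val(H^t)_{c'}\leq\val(G^t)_{c'}$ from the induction hypothesis. For $c'=c$, whose successors are $\{k',\ell\}$ in $H$ versus $\{k,\ell\}$ in $G$, combining the induction hypothesis (applied at $k'$ and at $\ell$) with the standing assumption $\val(G^{t-1})_{k'}\leq\val(G^{t-1})_k$ gives the chain $\tfrac{1}{2}(\val(H^{t-1})_{k'}+\val(H^{t-1})_\ell)\leq\tfrac{1}{2}(\val(G^{t-1})_{k'}+\val(G^{t-1})_\ell)\leq\tfrac{1}{2}(\val(G^{t-1})_k+\val(G^{t-1})_\ell)$, i.e.\ $\val(H^t)_c\leq\val(G^t)_c$.

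Once the inequality holds at every coin toss position, it propagates to the remaining positions by the elementary monotonicity of DGG solutions with respect to terminal payoffs: lowering the terminal payoffs weakly decreases the expected payoff of every fixed strategy profile and hence the value of every position, and the DGG in question is over the same graph in $G$ and $H$; therefore $\val(H^t)_m\leq\val(G^t)_m$ at every non-coin-toss position $m$. The only real subtlety lies in stating the two-step recursion for $\val(F^t)$ cleanly, which Lemma~\ref{lem-sem} and the modified value iteration of Figure~\ref{alg2} effectively supply; beyond that the proof is a straightforward interplay between averaging at coin toss positions and DGG monotonicity, and I foresee no serious combinatorial obstacle.
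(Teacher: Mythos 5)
Your proposal is correct and follows essentially the same route as the paper: induction on $t$, with the base case $t=0$ immediate because the redirected arc leaves a coin toss position, and the inductive step splitting into the coin toss positions (using the hypothesis $\val(G^{t-1})_{k'}\leq\val(G^{t-1})_k$ at $c$ and plain averaging elsewhere) and the Max/Min positions (using monotonicity of the deterministic graphical game solution in the terminal payoffs, the underlying DGG being identical for $G$ and $H$). The only difference is cosmetic: the paper orders the cases as Max/Min positions first, then coin toss positions $j\neq c$, then $c$, whereas you treat the coin toss positions first and then propagate; the substance is the same.
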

\begin{proof}
In this proof we will throughout refer to the
properties of $\ModifiedValueIteration$ and use Lemma \ref{lem-sem}.
We show by induction in $t$ that
$\forall j,t: \val (H^t)_{j} \leq \val (G^t)_{j}$.
For $t=0$ we have $\val (H^t)_{j} = \val (G^t)_{j}$ by inspection of the algorithm.
Now assume that the inequality holds for all values smaller than $t$ and for all
positions $i$ and we will show that it holds for $t$ and all positions $j$.
Consider a fixed position $j$. 
There are three cases.
\enumerate{}
\item{}
{\em The position $j$ belongs to Max or Min}.
In this case, we observe that the function computed by $\SolveDGG$  to determine the value of position
$j$ in $\ModifiedValueIteration$ is a monotonously increasing
function. Also, the deterministic graphical game obtained when replacing coin toss
positions with terminals is the same for $G$ and for $H$.
By the induction hypothesis, we have that 
for all $i$, $\val (H^{t-1})_i \leq \val (G^{t-1})_i $. So, 
$\val (H^{t})_j \leq \val (G^{t})_j $.
\item{}{\em The position $j$ is a coin toss position, but not $c$}.
In this case, we
have 
\[\val (G^t)_j = \frac{1}{2} \val (G^{t-1})_a + \frac{1}{2} \val (G^{t-1})_b,\] 
and
\[\val (H^t)_j = \frac{1}{2} \val (H^{t-1})_a + \frac{1}{2} \val (H^{t-1})_b\]
where $a$ and $b$ are the successors of $j$. By the induction
hypothesis, $\val(H^{t-1})_a \leq \val(G^{t-1})_a$ and  
$\val(H^{t-1})_b \leq \val(G^{t-1})_b$.
Again, we have $\val(H^{t})_j \leq \val(G^{t})_j$.
\item{}{\em The position $j$ is equal to $c$.}
In this case, we have
\[\val(G^t)_c  = \frac{1}{2} \val(G^{t-1})_a + \frac{1}{2} \val(G^{t-1})_k\]
where $a$ and $k$ are the successors of $c$ in $G$ while
\[\val(H^t)_c = \frac{1}{2} \val(H^{t-1})_a + \frac{1}{2} \val(H^{t-1})_{k'}.\] By the induction hypothesis we have
$\val(H^{t-1})_{a} \leq \val(G^{t-1})_{a}$. We also have that $\val(H^{t-1})_{k'} \leq \val(G^{t-1})_{k'}$ which is, by assumption, at
most $\val(G^{t-1})_{k}$. So, we have $\val(H^t)_c \leq \val(G^t)_c$.
\end{proof}

\begin{theorem}\label{thm-extreeemal}
$E_{n,r}$ is an extremal game in $S_{n,r}$.
\end{theorem}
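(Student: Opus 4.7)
The plan is to take an arbitrary $t$-extremal game $G\in S_{n,r}$ and transform it into $E_{n,r}$ by a sequence of arc modifications each of which weakly preserves $t$-extremality (i.e., only decreases the vector $\val(G^t)$). By Lemma~\ref{lem:Value-1} we may assume from the outset that $G$ contains no Max positions and that every position of $G$ has value $1$. The only tool available for surgery is Lemma~\ref{lem:change-child}, which lets us redirect a coin-toss arc $c\to k$ to $c\to k'$ whenever $\val(G^t)_{k'}\le \val(G^t)_k$ uniformly in~$t$, so the first substantive task is to locate a ``dominated'' target position that this lemma can aim at.

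To identify such a target, I would apply Lemma~\ref{Hlemma} iteratively to produce an ordering $v_n,v_{n-1},\ldots,v_1$ of the non-GOAL positions: starting from $V'_n = V\setminus\{\mathrm{GOAL}\}$, at step $i$ apply the lemma to $V'_i=\{v_1,\ldots,v_i\}$ and let $v_i$ be the position it produces. By construction each $v_i$ has at least one successor (both, if $v_i$ is a Min position) in the already-peeled-off set $\{v_{i+1},\ldots,v_n,\mathrm{GOAL}\}$. Intuitively this sorts positions by ``distance to GOAL'': $v_n$ should have the largest timed values, $v_1$ the smallest. I would make this precise by an induction on $t$ using Lemma~\ref{lem-sem}, which gives a monotonicity or domination statement along the ordering certifying the hypotheses of Lemma~\ref{lem:change-child} for the redirections to follow.

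With ordering and domination in hand, I would reshape the coin-toss structure of $G$ into that of $E_{n,r}$ by a batch of applications of Lemma~\ref{lem:change-child}. For each coin-toss position $v_i$, one outgoing arc is forced (by the construction of the ordering) into the future set; I would redirect the free arc to the dominated coin toss playing the role of position $r$ in $E_{n,r}$ (producing the common ``sink'' with the self-loop), and iteratively shorten the forced arc down to the immediately next coin toss in the ordering, reproducing the chain-plus-self-loop topology of Figure~\ref{fig-extreme}. Min positions, whose outgoing arcs Lemma~\ref{lem:change-child} does not directly let us modify, would be handled by first redirecting any coin-toss arcs pointing \emph{into} them away from them via Lemma~\ref{lem:change-child} (using an already-dominated target); once a Min position is disconnected from the coin-toss subgame, its outgoing arcs can be freely replaced by two copies of the GOAL-arc without affecting the extremality witness.

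The main obstacle I anticipate is maintaining the uniform-in-$t$ domination hypothesis of Lemma~\ref{lem:change-child} throughout the sequence of modifications: every redirection simultaneously perturbs all the timed values of the game, so the monotonicity along the Lemma~\ref{Hlemma} ordering must be carefully re-established, or the redirections scheduled (for instance, peeling from $v_n$ inward, and alternating or separating coin-toss and Min passes) so that the hypothesis of Lemma~\ref{lem:change-child} is re-verifiable at each step. This bookkeeping, together with the argument that the extremality witness is always a coin-toss position and hence that the Min modifications are admissible, is where the technical weight of the proof lies; the desired conclusion $\max_i(\val(G)_i-\val(G^t)_i)\le \max_i(\val(E_{n,r})_i-\val(E_{n,r}^t)_i)$ then falls out of Lemma~\ref{lem:change-child} telescoped across the modifications, combined with $\val(E_{n,r})=\mathbf{1}$.
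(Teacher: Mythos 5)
Your high-level plan coincides with the paper's: reduce to Min-only games in which every position has value one (Lemma \ref{lem:Value-1}), order the positions by repeatedly peeling with Lemma \ref{Hlemma}, and then rewire arcs one at a time via Lemma \ref{lem:change-child} until the game becomes $E_{n,r}$. But the step you defer --- ``I would make this precise by an induction on $t$ \ldots\ which gives a monotonicity or domination statement along the ordering'' --- is exactly the part that does not work as stated, and it is the crux of the proof. The ordering produced by Lemma \ref{Hlemma} is purely structural (each peeled position has a successor among the already-peeled ones); it does \emph{not} imply that timed values are monotone along the ordering. For instance, if $a$ is a coin toss position with both successors equal to GOAL and $b$ is a coin toss position with successors GOAL and $a$, the peeling may legitimately output $b$ before $a$, yet $\val(G^t)_a=1\geq\val(G^t)_b$ for all $t\geq 1$, so the position peeled last need not have the smallest timed value. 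Consequently you have no certified target $k'$ satisfying the uniform-in-$t$ hypothesis $\forall t:\val(G^t)_{k'}\leq\val(G^t)_{k}$ that Lemma \ref{lem:change-child} requires, and your ``batch of applications'' cannot get started.

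The paper's resolution, which is absent from your proposal, is to first \emph{add} $n$ auxiliary Min positions $n+1,\ldots,2n$ forming a chain that computes the running minimum, so that $\val({H'}^t)_{2n}=\min_{j}\val({H'}^t)_{j}$ for every $t$ by construction; position $2n$ is then a universally dominated target, and all higher successors of coin toss positions can be redirected to it immediately. Only after the lower successors have been rewired into the chain $i_1\rightarrow\mathrm{GOAL}$, $i_j\rightarrow i_{j-1}$ (each such step justified by a separate inductive claim showing that every position with index below $i_j$ dominates $i_{j-1}$ in every timed game) can the higher successors be moved from $2n$ to $i_r$ and the auxiliary positions removed. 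Your instinct that re-establishing the domination hypothesis after each modification is ``where the technical weight of the proof lies'' is correct, but the proposal leaves that weight unlifted, and the specific device it gestures at (value monotonicity along the Lemma \ref{Hlemma} ordering) is false in general. As written, the argument has a genuine gap.
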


\begin{proof}
Let $H_0\in S_{n,r}$, where $V_1=\emptyset$ and all positions in $H_0$
have value one. We will show that for all $t$ we have that $\max_k (\val (E_{n,r})_k - \val (E_{n,r}^t)_k) \geq \max_{k'}(\val (H_0)_{k'} - \val (H_0^t)_{k'})$. Since by Lemma \ref{lem:Value-1}, we can take $H_0$ to be a $t$-extremal game for any $t$, $E_{n,r}$ is a $t$-extremal game for all $t$ and is hence an extremal game. 

To illustrate the proof we will use as running example the game in Figure \ref{fig:w0}.

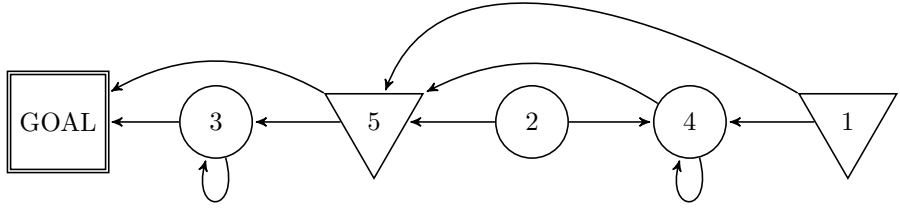
\begin{figure}
\centering
\begin{tikzpicture}[scale=0.2,->,>=stealth',shorten >=1pt,auto,node distance=4.2cm*0.5,
                    semithick]
      
\tikzstyle{every state}=[fill=white,draw=black,text=black,font=\small , inner sep=-0.05cm]
\tikzstyle{max}=[state,regular polygon,regular polygon sides=3,inner sep=0cm,minimum height=1.5cm,minimum width=1.5cm]
\tikzstyle{min}=[max,regular polygon rotate=180]

    \node[state,accepting,regular polygon,regular polygon sides=4,inner sep=-0.05cm] (perp)  {GOAL};
    \node[state] (q1) [right of=perp] {3};
    \node[min] (q2) [right of=q1] {5};
    \node[state] (q3) [right of=q2] {2};
    \node[state] (q4) [right of=q3] {4};
    \node[min] (q5) [right of=q4] {1};

\path 
(q1)  edge (perp)
(q1)  edge [loop below] (q1)
(q2)  edge [bend right] (perp)
(q2)  edge (q1)
(q3)  edge  (q2)
(q3)  edge  (q4)
(q4)  edge  [loop below]
(q4)  edge[bend right] (q2)
(q5)  edge  (q4)
(q5)  edge [bend right,in=-110] (q2);

\end{tikzpicture}
\caption{An example of $H_0$, with $n=5$ and $r=3$.
Circle nodes are coin toss positions, triangle nodes are Min positions
and the node labeled GOAL is 
the GOAL position. Note that this particular $H_0$ is not extremal.}
\label{fig:w0}
\end{figure}

We shall construct a sequence
$H_1, H_2, \ldots, H_n$ of games (which will in fact
be identical to $H_0$, except that positions have been renumbered), 
so that $H_k$ has the following property $P_k$.

{\bf Property} $P_k$: 
Any Min position $j$ among the positions
$1,2,\ldots,k$ has {\em all} successors within the set
$\{1,2,\ldots,j-1,\mbox{GOAL}\}$. Any coin toss position $j$ among
the positions $\{1,2,\ldots,k\}$ has {\em at least one} successor
within $\{1,2,\ldots,j-1,\mbox{GOAL}\}$.

Suppose we already constructed $H_j$ for $j < k$. 
We show how to construct $H_k$ based on $H_{k-1}$. Applying Lemma 
\ref{Hlemma} to the game $H_{k-1}$
with $V' = \{k, \ldots, n\}$, we find among the positions $k,\ldots,n$
either a coin toss position $u$ with one successor in $\{1,2,\ldots,k-1,\mbox{GOAL}\}$
or a Min-position $u$ with all successors in $\{1,2,\ldots,k-1,\mbox{GOAL}\}$. In either case,
we renumber $u$ to $k$ and $k$ to $u$ and let the resulting game be $H_k$.

Figure \ref{fig:wn} shows $H_n$ for the case of our running example.
\begin{figure}
\centering
\begin{tikzpicture}[scale=0.2,->,>=stealth',shorten >=1pt,auto,node distance=4.2cm*0.5,
                    semithick , inner sep=-0.05cm]
      
\tikzstyle{every state}=[fill=white,draw=black,text=black,font=\small]
\tikzstyle{max}=[state,regular polygon,regular polygon sides=3,inner sep=0cm,minimum height=1.5cm,minimum width=1.5cm]
\tikzstyle{min}=[max,regular polygon rotate=180]

    \node[state,accepting,regular polygon,regular polygon sides=4,inner sep=-0.05cm] (perp)  {GOAL};
    \node[state] (q1) [right of=perp] {1};
    \node[min] (q2) [right of=q1] {2};
    \node[state] (q3) [right of=q2] {3};
    \node[state] (q4) [right of=q3] {4};
    \node[min] (q5) [right of=q4] {5};

\path 
(q1)  edge (perp)
(q1)  edge [loop below] (q1)
(q2)  edge [bend right] (perp)
(q2)  edge (q1)
(q3)  edge  (q2)
(q3)  edge  (q4)
(q4)  edge  [loop below]
(q4)  edge[bend right] (q2)
(q5)  edge  (q4)
(q5)  edge [bend right,in=-110] (q2);

\end{tikzpicture}
\caption{$H_n$, if $H_0$ is the game in Figure \ref{fig:w0}. }
\label{fig:wn}
\end{figure}
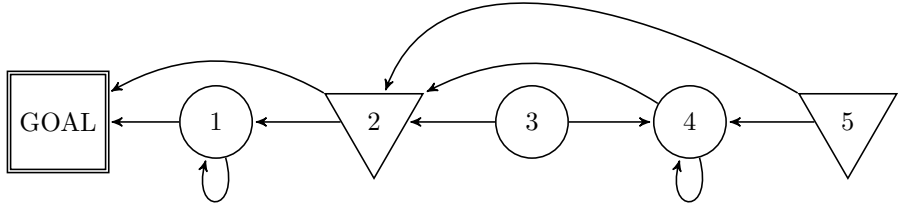

Each coin toss position in $H_n$ has at least one successor with a lower index than
itself. (Recall that GOAL has index 0.) 
In the following, we call this successor the {\em lower} successor and the
other successor the {\em higher} successor. If both successors in
fact have a lower index than the position, we choose one of the
two successor arbitrarily as the higher successor. 

We now make a series of transformations from $H_n$ generating
a new sequence of games. This
will take us outside the set of games in $S_{n,r}$, but the last
game in the sequence will again be in $S_{n,r}$. For each transformation, from $G'$ to $G''$, we will show that $\val({G''}^t)_k\leq \val({G'}^t)_k$ for all $t$ and $k$. For the final game $E_{n,r}$ we arrive at, we clearly have that $\val(E_{n,r})_{k}=1$ for all $k$. This is in fact also true for all intermediate games, but we shall not need that fact.

For each of the original non-terminal positions $1,2,\ldots,n$ in $H_n$, we add a Min-position. We
assign index $n+j$ to the Min-position associated with position $j$. We let the two successors of position $n+j$ be
$j$ and $n+j-1$,
except for the case of $n+1$, where we let the two successors be $1$ and GOAL. Let the resulting game
be denoted $H'$. For our running example, $H'$ is shown in Figure \ref{fig:w'}. 

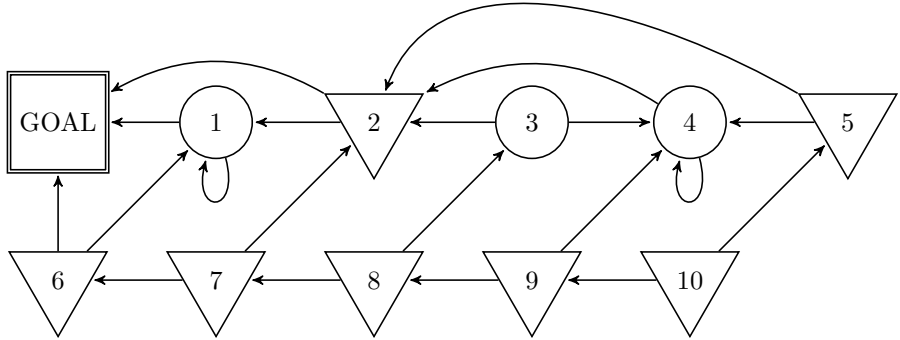
\begin{figure}
\centering
\begin{tikzpicture}[scale=0.2,->,>=stealth',shorten >=1pt,auto,node distance=4.2cm*0.5,
                    semithick]
      
\tikzstyle{every state}=[fill=white,draw=black,text=black,font=\small]
\tikzstyle{max}=[state,regular polygon,regular polygon sides=3,inner sep=0cm,minimum height=1.5cm,minimum width=1.5cm]
\tikzstyle{min}=[max,regular polygon rotate=180]

    \node[state,accepting,regular polygon,regular polygon sides=4,inner sep=-0.05cm] (perp)  {GOAL};
    \node[state] (q1) [right of=perp] {1};
    \node[min] (q2) [right of=q1] {2};
    \node[state] (q3) [right of=q2] {3};
    \node[state] (q4) [right of=q3] {4};
    \node[min] (q5) [right of=q4] {5};
    \node[min] (m1) [below of=perp] {6};
    \node[min] (m2) [below of=q1] {7};
    \node[min] (m3) [below of=q2]{8};
    \node[min] (m4) [below of=q3]{9};
    \node[min] (m5) [below of=q4]{10};

\path 
(q1)  edge (perp)
(q1)  edge [loop below] (q1)
(q2)  edge [bend right] (perp)
(q2)  edge (q1)
(q3)  edge  (q2)
(q3)  edge  (q4)
(q4)  edge  [loop below]
(q4)  edge[bend right] (q2)
(q5)  edge  (q4)
(q5)  edge [bend right,in=-110] (q2)
(m1) edge (q1)
(m1) edge (perp)
(m2) edge (q2)
(m2) edge (m1)
(m3) edge (q3)
(m3) edge (m2)
(m4) edge (q4)
(m4) edge (m3)
(m5) edge  (q5)
(m5) edge (m4);

\end{tikzpicture}
\caption{$H'$, if $H_0$ is the game in Figure \ref{fig:w0}. }
\label{fig:w'}
\end{figure}

Applying Lemma \ref{lem-sem} and inspecting the code of $\ModifiedValueIteration$, we have that \[\forall k\in \{1,2,\ldots,n\},t: \val({H'}^t)_k= \val(H_n^t)_k.\] We will only use that $\forall k\in \{1,2,\ldots,n\},t: \val({H'}^t)_k\leq  \val(H_n^t)_k$.

We also have the following fact: $\val ({H'}^t)_{2n} =\min_{j \in \{1,2,\ldots,n\}} \val ({H'}^t)_{j}$. Indeed, we can argue by induction in $k$ that
$\val ({H'}^t)_{n+k}=\min_{j \in \{1,2,\ldots,k\}} \val ({H'}^t)_{j}$. For the base case,
we have that $\val ({H'}^t)_{n+1} = \min(1, \val ({H'}^t)_{1})$. For $j > 1$, we have \[\val ({H'}^t)_{n+j} = \min(\val ({H'}^t)_{j}, \val ({H'}^t)_{n+j-1}),\]
completing the proof.
This property, and the fact that the proof only used information about the successors of $n+k$ for $k\in \{1,2,\ldots,n\}$, allows us to apply Lemma \ref{lem:change-child} iteratively 
to modify the game by changing the higher successor of each and 
every coin toss position to
be the position $2n$. We denote this game
$H'_0$. 

For our running example, $H'_0$ is shown in Figure \ref{fig:w'0}. 

\begin{figure}
\centering
\begin{tikzpicture}[scale=0.2,->,>=stealth',shorten >=1pt,auto,node distance=4.2cm*0.5,
                    semithick]
      
\tikzstyle{every state}=[fill=white,draw=black,text=black,font=\small, inner sep=-0.05cm]
\tikzstyle{max}=[state,regular polygon,regular polygon sides=3,inner sep=0cm,minimum height=1.5cm,minimum width=1.5cm]
\tikzstyle{min}=[max,regular polygon rotate=180]

    \node[state,accepting,regular polygon,regular polygon sides=4,inner sep=-0.05cm] (perp)  {GOAL};
    \node[state] (q1) [right of=perp] {1};
    \node[min] (q2) [right of=q1] {2};
    \node[state] (q3) [right of=q2] {3};
    \node[state] (q4) [right of=q3] {4};
    \node[min] (q5) [right of=q4] {5};
    \node[min] (m1) [below of=perp] {6};
    \node[min] (m2) [below of=q1] {7};
    \node[min] (m3) [below of=q2]{8};
    \node[min] (m4) [below of=q3]{9};
    \node[min] (m5) [below of=q4]{10};

\path 
(q1)  edge (perp)
(q1)  edge [out=-45,in=150]  (m5)
(q2)  edge [bend right] (perp)
(q2)  edge (q1)
(q3)  edge  (q2)
(q3)  edge (m5)
(q4)  edge[bend right] (q2)
(q4)  edge  (m5)
(q5)  edge  (q4)
(q5)  edge [bend right,in=-110] (q2)
(m1) edge (q1)
(m1) edge (perp)
(m2) edge (q2)
(m2) edge (m1)
(m3) edge (q3)
(m3) edge (m2)
(m4) edge  (q4)
(m4) edge (m3)
(m5) edge  (q5)
(m5) edge (m4);

\end{tikzpicture}
\caption{$H'_0$, if $H_0$ is the game of Figure \ref{fig:w0}. }
\label{fig:w'0}
\end{figure}

Let the coin toss positions in $H'_0$ be $i_1 < i_2 < \cdots < i_r$.
Then, we define a sequence of games $H'_1, H'_2, \ldots, H'_r$ as follows.
We define $H'_1$ from $H'_0$ by changing the lower successor of $i_1$ to GOAL.
For $j > 1$, we define $H'_j$ from $H'_{j-1}$ by changing the lower 
successor of $i_j$ to be $i_{j-1}$.

For our running example, $H'_r$ is given in Figure \ref{fig:w'r}. 

\begin{figure}
\centering
\begin{tikzpicture}[scale=0.2,->,>=stealth',shorten >=1pt,auto,node distance=4.2cm*0.5,
                    semithick]
      
\tikzstyle{every state}=[fill=white,draw=black,text=black,font=\small, inner sep=-0.05cm]
\tikzstyle{max}=[state,regular polygon,regular polygon sides=3,inner sep=0cm,minimum height=1.5cm,minimum width=1.5cm]
\tikzstyle{min}=[max,regular polygon rotate=180]

    \node[state,accepting,regular polygon,regular polygon sides=4,inner sep=-0.05cm] (perp)  {GOAL};
    \node[state] (q1) [right of=perp] {1};
    \node[min] (q2) [right of=q1] {2};
    \node[state] (q3) [right of=q2] {3};
    \node[state] (q4) [right of=q3] {4};
    \node[min] (q5) [right of=q4] {5};
    \node[min] (m1) [below of=perp] {6};
    \node[min] (m2) [below of=q1] {7};
    \node[min] (m3) [below of=q2]{8};
    \node[min] (m4) [below of=q3]{9};
    \node[min] (m5) [below of=q4]{10};

\path 
(q1)  edge (perp)
(q1)  edge [out=-45,in=150]  (m5)
(q2)  edge [bend right] (perp)
(q2)  edge (q1)
(q3)  edge [bend right]  (q1)
(q3)  edge (m5)
(q4)  edge (q3)
(q4)  edge  (m5)
(q5)  edge  (q4)
(q5)  edge [bend right,in=-110] (q2)
(m1) edge (q1)
(m1) edge (perp)
(m2) edge (q2)
(m2) edge (m1)
(m3) edge (q3)
(m3) edge (m2)
(m4) edge  (q4)
(m4) edge (m3)
(m5) edge  (q5)
(m5) edge (m4);

\end{tikzpicture}
\caption{$H'_r$, if $H_0$ is the game in Figure \ref{fig:w0}. }
\label{fig:w'r}
\end{figure}

{\em Claim.}
For $t \geq 0, j \in \{1,\ldots,r+1\}$, the following holds. 
For a position with index 
$k$ strictly smaller than $i_{j}$, we have $\val({H'}^t_{j-1})_k \geq \val({H'}^t_{j-1})_{i_{j-1}}$.
Here, by convention, we let $i_0$ be the GOAL position when considering
the
statement for $j=1$ and we let $i_{r+1}$ be $\infty$ when
considering
the statement for $j=r+1$.

{\em Proof of claim.}
The proof is by induction in $j$.

Clearly, $\val ({H'}_{j'}^t)_k = 1$ for all positions $k$ in $1,2,\ldots,i_{1}-1$ and for all $j'$ and $t$,
so this settles the base case of $j=1$. 

For larger values of $j$, 
and $k < i_{j-1}$, we have by construction that $\val ({H'}_{j-1}^t)_k=\val ({H'}_{j-2}^t)_k$.
Now, $k\geq i_{j-1}$. Therefore there are two cases. Either $k$ is a coin toss position or $k$ is a Min-position. 

If $k$ is a coin toss position, we can without loss of generality assume that $k=i_{j-2}$, since it has the smallest value among all coin toss positions, by the induction hypothesis. Therefore, we only need to show that $\forall t: \val ({H'}_{j-1}^t)_{i_{j-2}}\geq \val ({H'}_{j-1}^t)_{i_{j-1}}$.
For $j=2$, we have that $\forall t: \val({H'}_{1}^t)_{i_0}=1\geq \val({H'}_{1}^t)_{i_1}$. For $j\geq 3$, 
we have by the properties of the algorithm that 
for $\val ({H'}_{j-1}^0)_{i_{j-2}}=0=\val ({H'}_{j-1}^0)_{i_{j-1}}$ and 
$\forall t\geq 1: \val ({H'}_{j-1}^t)_{i_{j-2}} =\frac{1}{2} \val ({H'}^{t-1}_{j-1})_{i_{j-3}}+\frac{1}{2}\val ({H'}_{j-1}^{t-1})_{i_{2n}}$. 
We have by the induction hypothesis that 
\begin{eqnarray*}
\frac{1}{2}\val ({H'}_{j-1}^{t-1})_{i_{j-3}}+\frac{1}{2}\val ({H'}_{j-1}^{t-1})_{i_{2n}} & \geq & \frac{1}{2}\val ({H'}_{j-1}^{t-1})_{i_{j-2}}+\frac{1}{2}\val ({H'}_{j-1}^{t-1})_{i_{2n}} \\
& = & \val ({H'}_{j-1}^{t})_{i_{j-1}}.
\end{eqnarray*}

If $k$ is a Min-position in $\{i_{j-1}+1,\ldots,i_{j}-1\}$, assume to the contrary
that some $k$ fails to satisfy $\val ({H'}_{j-1}^{t})_k \geq \val ({H'}_{j-1}^{t})_{i_{j-1}}$. Consider
the smallest such $k$. As $k$ is a Min-position with two successors 
both of which are smaller than $k$, we have that $\val ({H'}_{j-1}^{t})_k$
is the minimum of two numbers, both
 of which are at least $\val ({H'}_{j-1}^{t})_{i_{j-1}}$,
either by the induction hypothesis or the assumption that $k$ is minimal.
{\em This completes the proof of the claim.}

By the claim, for all positions 
$k \in \{1,2,\ldots,n\}$, we have that
$\forall t\geq 0: \val ({H'}_r^t)_k \geq \val ({H'}_r^t)_{i_r}$. Also, by an induction argument identical to
the one we used to argue a similar property for $H'$, we have
$\forall t\geq 0: \val ({H'}_r^t)_{2n} =\val ({H'}_r^t)_{i_r}$.
Thus we may define the game $H''$ from $H'_r$ by applying Lemma \ref{lem:change-child} and changing 
all higher successors of
all coin toss positions $i_j$ to be $i_r$ (instead of $2n$).

For our running example, $H''$ is the game of Figure \ref{fig:w''}. 

\begin{figure}
\centering
\begin{tikzpicture}[scale=0.2,->,>=stealth',shorten >=1pt,auto,node distance=4.2cm*0.5,
                    semithick]
      
\tikzstyle{every state}=[fill=white,draw=black,text=black,font=\small, inner sep=-0.05cm]
\tikzstyle{max}=[state,regular polygon,regular polygon sides=3,inner sep=0cm,minimum height=1.5cm,minimum width=1.5cm]
\tikzstyle{min}=[max,regular polygon rotate=180]

    \node[state,accepting,regular polygon,regular polygon sides=4,inner sep=-0.05cm] (perp)  {GOAL};
    \node[state] (q1) [right of=perp] {1};
    \node[min] (q2) [right of=q1] {2};
    \node[state] (q3) [right of=q2] {3};
    \node[state] (q4) [right of=q3] {4};
    \node[min] (q5) [right of=q4] {5};
    \node[min] (m1) [below of=perp] {6};
    \node[min] (m2) [below of=q1] {7};
    \node[min] (m3) [below of=q2]{8};
    \node[min] (m4) [below of=q3]{9};
    \node[min] (m5) [below of=q4]{10};

\path 
(q1)  edge (perp)
(q1)  edge [out=-45,in=-155]  (q4)
(q2)  edge [bend right] (perp)
(q2)  edge (q1)
(q3)  edge [bend right]  (q1)
(q3)  edge [bend left](q4)
(q4)  edge  (q3)
(q4)  edge  [loop below] (q4)
(q5)  edge  (q4)
(q5)  edge [bend right,in=-110] (q2)
(m1) edge (q1)
(m1) edge (perp)
(m2) edge (q2)
(m2) edge (m1)
(m3) edge (q3)
(m3) edge (m2)
(m4) edge  (q4)
(m4) edge (m3)
(m5) edge  (q5)
(m5) edge (m4);

\end{tikzpicture}
\caption{$H''$, if $H_0$ is the game in Figure \ref{fig:w0}. }
\label{fig:w''}
\end{figure}
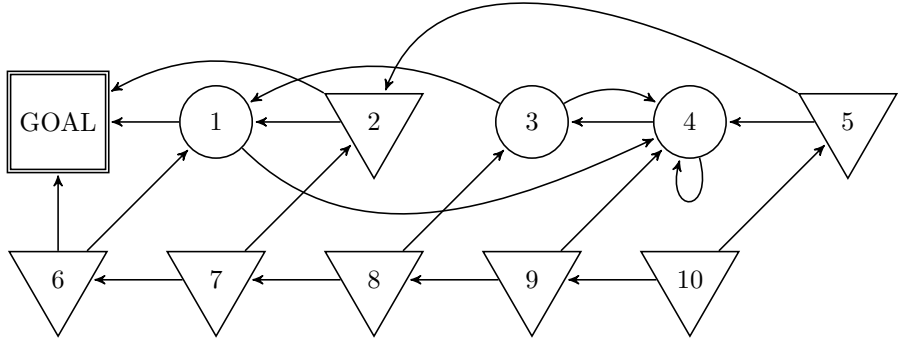

Finally, we arrive at $E_{n,r}$ by 
subsequently removing all ``new'' Min-positions $n+1,\ldots,2n$ and 
changing all successors of all original
Min-positions of $H''$ to be the GOAL position. 

For our running example, $E_{5,3}$ is the extremal game identified by
the proof, and is given in Figure \ref{fig:enr}. Note that the same
game would also be identified for any other game with $r=3$ and $n=5$ (up to the
indexing of the positions). 

It is easy to see that all positions in $E_{n,r}$ have value 1. We have that
each Min-position $k$ in $H''$ satisfies $\val ({H''}^t)_k \geq \val ({H''}^t)_{i_r}$.

We have that $\val (H_n^t)_k \geq \val ({H''}^t)_{i_r}=\val ({E_{n,r}}^t)_{i_r}$ for all $k\in \{1,2,\dots,n\}$, since we found $H''$ either by applying Lemma \ref{lem:change-child} or in a way that did not change the value of any position for any time bound.

Also, $E_{n,r}\in S_{n,r}$, therefore, since at least one possible option for $H_0\in S_{n,r}$ (and therefore $H_n$, since $H_n$ was a reindexing of the positions in $H_0$) was a $t$-extremal game for any $t$, $E_{n,r}$ is extremal. 
This completes the proof of the lemma.

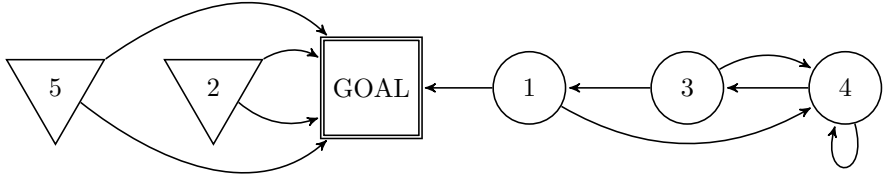
\begin{figure}
\centering
\begin{tikzpicture}[scale=0.2,->,>=stealth',shorten >=1pt,auto,node distance=4.2cm*0.5,
                    semithick]
      
\tikzstyle{every state}=[fill=white,draw=black,text=black,font=\small, inner sep=-0.05cm]
\tikzstyle{max}=[state,regular polygon,regular polygon sides=3,inner sep=0cm,minimum height=1.5cm,minimum width=1.5cm]
\tikzstyle{min}=[max,regular polygon rotate=180]

    \node[state,accepting,regular polygon,regular polygon sides=4,inner sep=-0.05cm] (perp)  {GOAL};
    \node[state] (q1) [right of=perp] {1};
    \node[state] (q2) [right of=q1] {3};
    \node[state] (q3) [right of=q2] {4};

    \node[min] (m1) [left of=perp] {2};
    \node[min] (m2) [left of=m1] {5};

\path 
(m1) edge [bend left] (perp)
(m1) edge [bend right] (perp)
(m2) edge [bend left,in=130] (perp)
(m2) edge [bend right,in=-130] (perp)
(q1) edge (perp)
(q1) edge [bend right] (q3)
(q2) edge (q1)
(q2) edge [bend left] (q3)
(q3) edge (q2)
(q3) edge [loop below] (q3);

\end{tikzpicture}
\caption{The game $E_{5,3}$, upto indexing of the positions.}
\label{fig:enr}
\end{figure}

\end{proof}

Having identified the extremal game $E_{n,r}$, we next estimate
$T(E_{n,r})$.

\begin{lemma}\label{howfast}
For $\forall n, r >0,\epsilon>0,t\geq 2 (\ln 2\epsilon^{-1})  \cdot 2^r, k\in V: \val(E_{n,r})_k - \val(E_{n,r}^T)_k\leq \epsilon$
\end{lemma}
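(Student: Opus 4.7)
The plan is to reduce the lemma to a scalar geometric-decay bound on the gap at the single worst position of $E_{n,r}$ and then prove it by induction on $t$. Set $u_k^t := \val(E_{n,r})_k - \val(E_{n,r}^t)_k$. Since every Min-position of $E_{n,r}$ has both outgoing arcs leading to GOAL, its value is $1$ in both games for every $t \geq 0$, so $u_k^t = 0$ there, and the GOAL case is trivial; only the coin-toss positions require real work. From Lemma~\ref{lem-sem} the modified value iteration on $E_{n,r}$ satisfies $u_1^t = u_r^{t-1}/2$ and $u_i^t = (u_{i-1}^{t-1} + u_r^{t-1})/2$ for $2 \leq i \leq r$, starting from $u_i^0 = 1$. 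A short induction on $t$ shows $u_1^t \leq u_2^t \leq \cdots \leq u_r^t$, so it suffices to bound $u_r^t$.

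I would then unfold this coupled system into a single scalar recurrence. Iterated substitution of $u_{r-j}^{t-j} = (u_{r-j-1}^{t-j-1} + u_r^{t-j-1})/2$ for $j = 0, 1, \ldots, r-2$ yields, for every $t \geq r$,
\[
u_r^t \;=\; \sum_{k=1}^{r-1} 2^{-k}\, u_r^{t-k} \;+\; 2^{-(r-1)}\, u_1^{t-r+1} \;=\; \sum_{k=1}^{r} 2^{-k}\, u_r^{t-k},
\]
after absorbing $u_1^{t-r+1} = u_r^{t-r}/2$ into the sum.

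The core of the argument is to show by strong induction on $t$ that $u_r^t \leq 2\, y^t$, where $y := 1 - 2^{-r-1}$. For $0 \leq t \leq r - 1$ the trivial bound $u_r^t \leq 1$ reduces the target to $2 y^t \geq 1$, which by Bernoulli's inequality is implied by $(r-1)\, 2^{-r-1} \leq 1/2$, i.e.\ $r - 1 \leq 2^r$, true for every $r \geq 1$. For $t \geq r$ the scalar recurrence combined with the inductive hypothesis gives
\[
u_r^t \;\leq\; 2 \sum_{k=1}^r 2^{-k}\, y^{t-k} \;=\; 2 y^t \sum_{k=1}^r (2y)^{-k},
\]
and the induction closes provided $\sum_{k=1}^r (2y)^{-k} \leq 1$. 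Evaluating the geometric sum and clearing $2y - 1 = 1 - 2^{-r} > 0$ turns this into $(2y)^r (2 - 2y) \leq 1$; substituting $1 - y = 2^{-r-1}$ collapses the left-hand side to $y^r \leq 1$, which holds trivially. Finally, for $t \geq 2(\ln 2\epsilon^{-1})\cdot 2^r$, the inequality $\ln y \leq -(1-y) = -2^{-r-1}$ gives
\[
u_r^t \;\leq\; 2 y^t \;\leq\; 2 e^{-t\, 2^{-r-1}} \;\leq\; 2 e^{-\ln(2/\epsilon)} \;=\; \epsilon,
\]
completing the proof.

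The main obstacle I expect is identifying the correct decay rate $y = 1 - 2^{-r-1}$ for the induction: choosing $y$ larger breaks the inductive step via $y^r \leq 1$, and choosing $y$ smaller breaks the base case $2 y^t \geq 1$ at $t = r-1$, so the constants in the lemma are essentially tight for this method. Everything else is either a routine unfolding of the value-iteration recurrence or short algebraic manipulation.
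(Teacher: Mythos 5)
Your proof is correct, and every step checks out: the recurrence $u_i^t=(u_{i-1}^{t-1}+u_r^{t-1})/2$, $u_1^t=u_r^{t-1}/2$ matches the semantics of Lemma~\ref{lem-sem} on $E_{n,r}$, the monotonicity $u_1^t\le\cdots\le u_r^t$ holds, the unfolding to $u_r^t=\sum_{k=1}^{r}2^{-k}u_r^{t-k}$ for $t\ge r$ is right, and the algebra closing the induction (the reduction of $\sum_{k=1}^r(2y)^{-k}\le 1$ to $y^r\le 1$) is exact. Your route differs from the paper's in execution though not in substance. The paper reinterprets $1-\val(E_{n,r}^t)_{i_r}$ probabilistically as the chance that $t$ fair coin tosses contain no run of $r$ consecutive tails, identifies this as $F^{(r)}_{t+2}/2^t$ with $F^{(r)}$ the Fibonacci $r$-step numbers, and then \emph{cites} the known asymptotics $F^{(r)}_m\le\phi_r^{m-1}$ with $\phi_r<2-2^{-r}$ to get the same bound $2(1-2^{-r-1})^t$. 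Your substitution $a_t=2^t u_r^t$ turns your scalar recurrence into exactly that $r$-step Fibonacci recurrence, so the two arguments are governed by the same dominant root; but where the paper leans on external facts about $\phi_r$, you prove the equivalent root bound from scratch via the strong induction with $y=1-2^{-r-1}$ and verify the base cases explicitly. What your version buys is self-containedness (no appeal to Fibonacci $r$-step asymptotics or to the combinatorics of runs) at the cost of a slightly longer calculation; the paper's version buys brevity and an intuitive probabilistic picture of why $E_{n,r}$ is slow to converge. Both yield the identical final estimate $2(1-2^{-r-1})^t\le 2e^{-t2^{-r-1}}\le\epsilon$.
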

\begin{proof}
We observe that for the purposes of estimating $\val(E_{n,r})_k - \val(E_{n,r}^T)_k$, we can
view $E_{n,r}$ as a game containing $r$ coin toss
positions only, since all Min-positions point directly to GOAL.
Also, when modified value iteration is applied to a game $G$ containing only
coin toss positions, Lemma \ref{lem-sem} implies that 
$\val (G^t)_k$  can be reinterpreted as the probability
that the absorbing Markov process starting in state $k$ is absorbed
within $t$ steps. By the structure of $E_{n,r}$, this is equal to the
probability that a sequence of $t$ fair coin tosses contains
$r$ consecutive tails. This is known to be exactly
$1-F_{t+2}^{(r)}/2^t$, where $F_{t+2}^{(r)}$ is the $(t+2)$'nd Fibonacci
$r$-step number, i.e. the number given by the linear homogeneous recurrence
$F_m^{(r)} = \sum_{i=1}^{r} F_{m-i}^{(r)}$ and the
boundary conditions $F_{m}^{(k)} = 0$, for $m \leq 0$,
$F_{1}^{(r)} = F_{2}^{(r)} = 1$. Asymptotically solving this linear recurrence,
we have that $F_m^{(r)} \leq (\phi_r)^{m-1}$ where $\phi_r$ is the root near 2
to the equation $x + x^{-r} = 2$. Clearly, $\phi_r < 2 - 2^{-r}$,
so \[F_{t+2}^{(r)}/2^t < \frac{(2-2^{-r})^{t+1}}{2^t} = 2(1-2^{-r-1})^{t+1}<2(1-2^{-r-1})^t.\]
Therefore, the
probability that the chain is not absorbed within $t = 2 (\ln 2\epsilon^{-1})  \cdot 2^r$ steps is
at most \[2(1-2^{-r-1})^{2 (\ln 2\epsilon^{-1})  \cdot 2^r} \leq 2e^{-\ln 2\epsilon^{-1}} = \epsilon.\]
\end{proof} 

\begin{corollary}\label{cor:howfast}
For $\forall n, r >0:T(E_{n,r})\leq 2 (\ln 2^{5r+1} ) \cdot 2^r$.
\end{corollary}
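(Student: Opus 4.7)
The plan is to derive this as an immediate specialization of Lemma~\ref{howfast}. By definition, $T(E_{n,r})$ is the smallest $t$ for which $\val(E_{n,r})_k - \val(E_{n,r}^t)_k \leq 2^{-5r}$ holds simultaneously for all positions $k$, so it suffices to verify that the threshold $t = 2(\ln 2^{5r+1}) \cdot 2^r$ meets the hypothesis of Lemma~\ref{howfast} for the error tolerance $\epsilon = 2^{-5r}$.

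First I would set $\epsilon = 2^{-5r}$ in Lemma~\ref{howfast}. Then $2\epsilon^{-1} = 2 \cdot 2^{5r} = 2^{5r+1}$, and the hypothesis of the lemma becomes $t \geq 2 (\ln 2^{5r+1}) \cdot 2^r$. Plugging in $t = 2 (\ln 2^{5r+1}) \cdot 2^r$ exactly satisfies this hypothesis with equality, so Lemma~\ref{howfast} yields $\val(E_{n,r})_k - \val(E_{n,r}^t)_k \leq 2^{-5r}$ for every position $k$. This is exactly the defining condition of $T(E_{n,r}) \leq t$, giving the desired bound.

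There is no real obstacle here, since all the analytic work has already been done inside Lemma~\ref{howfast} (the reduction of $E_{n,r}$ to an absorbing Markov chain counting $r$ consecutive tails, the Fibonacci-$r$-step bound, and the estimate $\phi_r < 2 - 2^{-r}$). The corollary is a one-line substitution, and the proof just needs to record the arithmetic $\ln(2 \cdot 2^{5r}) = \ln 2^{5r+1}$ to make the match with the hypothesis of Lemma~\ref{howfast} explicit.
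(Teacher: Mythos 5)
Your proposal is correct and matches the paper's proof, which is literally stated as ``insertion into Lemma~\ref{howfast}'': setting $\epsilon = 2^{-5r}$ gives $2\epsilon^{-1} = 2^{5r+1}$, and the threshold $t = 2(\ln 2^{5r+1})\cdot 2^r$ then satisfies the lemma's hypothesis, which by the definition of $T(E_{n,r})$ yields the claimed bound. You have simply written out the one-line substitution the authors left implicit.
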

\begin{proof}
The proof is by insertion into Lemma \ref{howfast}.
\end{proof}

\section{Conclusions}
We have shown an algorithm solving simple stochastic games obtaining
an improved running time in the worst case compared to previous algorithms, as a function of its
number of coin
toss positions. It is relevant to observe that the {\em best} case
complexity of the algorithm is strongly related to its {\em worst}
case complexity, as the number of iterations of the main loop is fixed
in advance.

As mentioned in the introduction, our paper is partly motivated by a
result of Chatterjee {\em et al.} \cite{C09} analysing the {\em strategy
  iteration} algorithm of the same authors \cite{ChatQest} for the
case of simple stochastic games. We can in fact improve their
analysis, using the techniques of this paper. We combine three facts:
\begin{enumerate}
\item{}(\cite[Lemma 8]{ChatQest}) For a game $G \in S_{n,r}$, after $t$ iterations of the strategy
  iteration algorithm \cite{ChatQest} applied to $G$, the (positional)
  strategy computed for Max guarantees a probability
  of winning of at least $\val(\bar G^t)_k$ against any
  strategy of the opponent when play starts in
  position $k$, 
where $\bar G^t$ is the game defined from $G$ in Definition \ref{def-unmodified value iteration} of this paper.
\item{}For a game $G \in S_{n,r}$ and all $k,t$, $\val(\bar G^{t(n-r+1)})_k
  \geq \val(G^t)_k$. This is a direct consequence of the definitions
  of the two games, and the fact that in an optimal play, either a coin
  toss position is encountered at least after every $n-r+1$ moves of the
  pebble, or never again.
\item{}Corollary \ref{cor:howfast} of the present paper.
\end{enumerate}
These three facts together implies that the strategy iteration
algorithm after $2 (\ln 2^{5r+1})  \cdot 2^r(n-r+1)$ iterations has computed a strategy that
guarantees the values of the game within an additive error of
$2^{-5r}$, for $r\geq 6$. As observed by Chatterjee {\em et al.} \cite{C09}, such a
strategy is in fact optimal. Hence, we conclude that their strategy
iteration algorithm terminates in time $2^r n^{O(1)}$. This improves their analysis of
the algorithm significantly, but still yields a bound on its worst
case running time inferior to the worst case running time of the
algorithm presented here. On the other hand, unlike the algorithm
presented in this paper, their algorithm has the desirable property that it may terminate faster than its worst case analysis suggests.
\bibliographystyle{plain}
\bibliography{gurvich}
\end{document}